\newtheorem{thm}{Theorem}[section]
\newtheorem{lem}[thm]{Lemma}
\newtheorem{conj}[thm]{Conjecture}
\newtheorem{assumption}[thm]{Assumption}
\newtheorem{pr}[thm]{Proposition}
\newtheorem{definition}[thm]{Definition}
\newtheorem{example}[thm]{Example}
\newtheorem{remark}[thm]{Remark}
\newenvironment{rem}{\begin{remark}\rm}{\end{remark}}
\newcommand{\E}{\mathbb{E}}
\newcommand{\R}{\mathbb R}
\newcommand{\eps}{\varepsilon}
\title{Extensions of the I-MMSE Relationship to Gaussian Channels with Feedback and Memory~\footnote{A preliminary version of this paper has been presented in IEEE ISIT 2015~\cite{HanSong2014}.}}
\author{\small \begin{tabular}{ccc}
Guangyue Han & Jian Song\\
The University of Hong Kong & The University of Hong Kong\\
email:  ghan@hku.hk & email: txjsong@hku.hk \\
\end{tabular}}
\date{\today}
\begin{document} \maketitle

\begin{abstract}
Unveiling a fundamental link between information theory and estimation theory, the I-MMSE relationship by Guo, Shamai and Verdu~\cite{gu05}, together with its numerous extensions, has great theoretical significance and various practical applications. On the other hand, its influences to date have been restricted to channels without feedback or memory, due to the absence of its extensions to such channels. In this paper, we propose extensions of the I-MMSE relationship to discrete-time and continuous-time Gaussian channels with feedback and/or memory. Our approach is based on a very simple observation, which can be applied to other scenarios, such as a simple and direct proof of the classical de Bruijn's identity.
\end{abstract}

{\bf Index Terms}: {\it mutual information, minimum mean-square error, the I-MMSE relationship, information theory, estimation theory, feedback channel, memory channel}

\section{Introduction}

Consider the following discrete-time memoryless Gaussian channel
\begin{equation} \label{discrete-time-Gaussian-channel}
Y=\sqrt{snr} X+Z,
\end{equation}
where $snr$ denotes the signal-to-noise ratio (SNR) of the channel, $X$ and $Y$ denote the input and output of the channel, respectively, and the standard normally distributed noise $Z$ is independent of $X$. An interesting recent result by Guo, Shamai and Verdu~\cite{gu05} states that for any channel input $X$ with $E[X^2] < \infty$,
\begin{equation} \label{I-MMSE}
\frac{d}{d snr} I(X; Y)= \frac{1}{2} \E[(X-\E[X|Y])^2],
\end{equation}
where the left hand side is the derivative of $I(X; Y)$ with respect to $snr$, and the right-hand side is half of the so-called \emph{minimum mean-square error} (MMSE), which corresponds to   the best estimation of $X$ given the observation $Y$ in the mean-square error sense. The I-MMSE relationship (\ref{I-MMSE}) carries over verbatim to linear vector Gaussian channels and has been widely extended to continuous-time Gaussian channels~\cite{gu05}, general additive Gaussian channels~\cite{za05}, additive non-Gaussian channels~\cite{gu05a}, arbitrary channels~\cite{pa07}, derivatives with respect to arbitrary parameterizations~\cite{pa06}, higher order derivatives~\cite{pa09, le15}, and so on.

Unveiling an important link between information theory and estimation theory, the I-MMSE relationship as above and its numerous extensions are of fundamental significance to relevant areas in these two fields and have been exerting far-reaching influences over a wide-range of topics. Representative applications include, but not limited to, power allocation of parallel Gaussian channels~\cite{lo06}, analysis of extrinsic information of code ensembles~\cite{pe06},
Gaussian broadcast channels~\cite{gu11, li14}, Gaussian wiretap channels~\cite{gu11, bu09}, Gaussian interference channels~\cite{bu10, wu11}, a simple proof of the classical entropy power inequality~\cite{ve06}. For comprehensive references to the applications of the I-MMSE relationship and its extensions, we refer to~\cite{sh11, gu13} .

On the other hand, all the applications of the I-MMSE relationship to date have been restricted to channels without feedback or memory, due to the lack of extensions of the I-MMSE relationship to such channels. In this regard, a ``plain'' generalization of the original I-MMSE relationship to feedback channels should not be expected, which has been noted in~\cite{gu05}, where an example is given to show that the exact I-MMSE relationship fails to hold for some continuous-time feedback channel. In this paper, we remedy the situations with some explicit correctional terms (which vanish if the channel does not have feedback or memory) and extend the I-MMSE relationship to channels with feedback or memory.  Despite the fact that the I-MMSE relationship have been examined from a number of perspectives (see its multiple proofs in~\cite{gu05}), our approach is still novel and powerful. As a matter of fact, other than recovering and extending the I-MMSE relationship, our approach can be applied elsewhere, such as yielding a simple and direct proof of the classical de Bruijn's identity~\cite{st59, co85}; see Section~\ref{new-proof-2}.

Our approach is based on a surprisingly simple idea, which can be roughly stated as follows: before taking derivative of an information-theoretic quantity with respect to certain parameters, we represent it as an expectation with respect to a probability space independent of the parameters. For illustrative purpose, in what follows, we consider the discrete-time Gaussian channel in (\ref{discrete-time-Gaussian-channel}) and review a ``conventional'' proof of (\ref{I-MMSE}) in~\cite{gu05} and compare it with ours.

First, note that for the channel in (\ref{discrete-time-Gaussian-channel}), taking derivative of $I(X; Y)$ is equivalent to that of $H(Y)$, which can be written as the expectation of $-\log f_Y(Y)$:
\begin{equation*}
H(Y)=-\E[\log f_Y(Y)].
\end{equation*}
In their fourth proof of (\ref{I-MMSE}), the authors of~\cite{gu05} choose the probability space, with respect to which the expectation as above is taken, to be the sample space of $Y$ (with naturally induced measure), which obviously depends on $snr$. With respect to this probability space, $H(Y)$ is naturally expressed as:
\begin{equation*}
H(Y)=-\int_{\mathbb{R}} f_Y(y) \log f_Y(y) dy.
\end{equation*}
Then, under some mild assumptions, the derivative of $H(Y)$ with respect to $snr$ can penetrate into the integral, and then (\ref{I-MMSE}) follows from integration by parts and other straightforward computations.

Under our approach, we would rather choose a probability space independent of $snr$. For example, choosing the probability space to be the sample space of $(X, Z)$, we will express $H(Y)$ as
\begin{equation*}
H(Y)= -\int_{\mathbb{R}} \int_{\mathbb{R}} f_X(x) f_Z(z) \log f_Y(\sqrt{snr} x+z) dx dz.
\end{equation*}
It turns out such a seemingly innocent shift of viewpoint will render the follow-up computations rather simple and direct before reaching (\ref{I-MMSE}); and most importantly, when applied to channels with feedback and/or memory, it naturally leads to extensions of the I-MMSE relationship. For instance, consider the discrete-time Gaussian channel with feedback:
\begin{equation*}
Y_i=\sqrt{snr} X_i(M, Y_1^{i-1})+Z_i, \quad i=1, 2, \ldots, n,
\end{equation*}
where the channel input $X_i$ depends on the message $M$ and the previous channel outputs $Y_1^{i-1}$. Using the above-mentioned approach, we will obtain the following extension (see Remark~\ref{extension-feedback}) of the I-MMSE relationship:
\begin{align} \label{main-result-in-introduction}
\hspace{-1cm} \frac{d}{d snr} I(X_1^n \rightarrow Y_1^n) & = \frac{1}{2} \sum_{i=1}^n \E[(X_i-\E\left[X_i|Y_1^n])^2 \right] + snr \sum_{i=1}^n \E\left[X_i \E_W\left[\left.\frac{d}{d snr} X_i \right|Y_1^n \right]-\E[X_i|Y_1^n] \frac{d}{d snr} X_i\right] \notag\\
&\hspace{2cm}+\sqrt{snr} \sum_{i=1}^n \E\left[Y_i \left(\frac{d}{d snr} X_i-\E_W\left[\left.\frac{d}{d snr} X_i \right|Y_1^n\right] \right) \right],
\end{align}
where $X_i$ is the abbreviated form of $X_i(M, Y_1^{i-1})$ and $I(X_1^n \rightarrow Y_1^n)$ is the directed information~\cite{massey90} between $X_1^n$ and $Y_1^n$. Directed information is a notion generalized from mutual information for feedback channels, and the second and third terms at the right hand side of (\ref{main-result-in-introduction}) are  correctional terms, which vanish when $X_i$ does not depend on $Y_1^{i-1}$ ({\it i.e.}, there is no feedback), so (\ref{main-result-in-introduction}) is indeed an extension of the I-MMSE relationship in (\ref{I-MMSE}) to discrete-time Gaussian channels with feedback. As elaborated later, the I-MMSE relationship can be extended to Gaussian channels with feedback and/or memory, in either discrete-time or continuous-time.

The remainder of the paper is organized as follows. In Section~\ref{new-proofs}, based on the proposed approach, we give a new proof of the I-MMSE relationship for discrete-time Gaussian channels, and a new proof of the classical de Bruijn's identity. We will present our extensions of the I-MMSE relationship, the main results in this paper, in Sections~\ref{main-results-1} and~\ref{main-results-2}, which will be followed by an outlook for some promising future directions in Section~\ref{outlook}.

\section{New Proofs of Existing Results} \label{new-proofs}

In this section, to further illustrate the idea of our approach, we give new proofs of some existing results: the original I-MMSE relationship in (\ref{I-MMSE}) and the classical de Bruijn's identity. To enhance the readability and emphasize the main idea, here and throughout the paper, we omit some technical details of checking the conditions required for the interchanges of differentiation and integration, which will be provided in the Appendices.

\subsection{A new proof of the I-MMSE relationship} \label{new-proof-1}

In this section, we consider the Gaussian channel specified in (\ref{discrete-time-Gaussian-channel}) and give a new proof of (\ref{I-MMSE}). Here and throughout the paper, we replace $\sqrt{snr}$ with $\rho \in \mathbb{R}$ ($\rho$ can be negative) to avoid notational cumbersomeness during the computation; the derivative with respect to $snr$ can be readily obtained with an application of the chain rule. Then, with the above-mentioned replacement, the channel (\ref{discrete-time-Gaussian-channel}) becomes
\begin{equation*}
Y=\rho X+Z,
\end{equation*}
where $\rho \in \mathbb{R}$, and we only have to prove that
\begin{equation} \label{rho-I-MMSE}
\frac{d}{d \rho} I(X; Y)= \rho \E[(X-\E[X|Y])^2].
\end{equation}

Obviously, the conditional density of $Y$ given $X=x$ by $f_{Y|X}(y|x)=\frac{1}{\sqrt{2 \pi}} e^{-(y-\rho x)^2/2}$, and the density function of $Y$ can be computed as
\begin{equation*}
f_{Y}(y)=\int_{\R} f_{Y|X}(y|x) f_X(x) dx.
\end{equation*}
It follows from the assumption that $X$ is independent of $Z$ that
\begin{equation*}
I(X; Y)=H(Y)-H(Y|X)=H(Y)-H(Z|X)=H(Y)-H(Z),
\end{equation*}
which, together with the fact that $Z$ does not depend on $\rho$, implies that
\begin{equation*}
\frac{d}{d\rho} I(X; Y)=-\frac{d}{d\rho}\E\left[ \log f_{Y}(Y) \right]\stackrel{(a)}{=}-\E\left[\frac{d}{d\rho} \log f_{Y}(Y)\right]=-\E \left[\frac{1}{f_Y(Y)}\frac{d }{d \rho}f_Y(Y) \right],
\end{equation*}
where $(a)$ will be justified in Appendix~\ref{Uniform-Integrability}. Now, some straightforward computations yield
\begin{align*}
\frac{d}{d\rho} f_{Y}(Y) &=\dfrac{d}{d\rho} \int_{\R}  f_{Y|X}(Y|x) f_X(x) dx\\
&\stackrel{(b)}{=}  \int_{\R} f_X(x) \dfrac{d}{d\rho} f_{Y|X}(Y|x) dx\\
&= -\int_{\R}  (\rho X+Z-\rho x)(X-x) f_{Y|X}(Y|x) f_X(x) dx\\
&= -f_Y(Y) \int_{\R} (\rho X+Z-\rho x) (X-x) f_{X|Y}(x|Y) dx,
\end{align*}
where $(b)$ will be justified in Appendix~\ref{Uniform-Integrability}. It then follows that
\begin{align*}
\dfrac{d}{d \rho} I(X; Y)  & = \E \left[\int_{\R} (Y-\rho x) (X-x) f_{X|Y}(x|Y) dx \right] \\
&= \E[YX-Y\E[X|Y]-\rho X \E[X|Y]+\rho \E[X^2|Y]]\\
&= \E[YX]-\E[YX]-\E[\rho \E^2[X|Y]]+\E[\rho \E[X^2|Y]]\\
&= \rho \E[X^2-\E^2[X|Y]]\\
&\stackrel{(c)}{=} \rho \E[(X-\E[X|Y])^2],
\end{align*}
where $(c)$ is due to the orthogonality principle.

\subsection{A new proof of de Bruijn's identity.} \label{new-proof-2}

The following de Bruijn's identity is a fundamental relationship between the differential entropy and the Fisher information~\cite{co2006}. Based on the proposed approach, we will give a new proof of this classical result.

\begin{thm}
Let $X$ be any random variable with a finite variance and let $Z$ be an independent standard normally distributed random variable. Then, for any $t > 0$,
\begin{equation} \label{entropy-fisher-information}
\frac{d}{dt} H(X+\sqrt{t}Z)=\frac{1}{2} J(X+\sqrt{t}Z),
\end{equation}
where $J(\cdot)$ denotes the Fisher information.
\end{thm}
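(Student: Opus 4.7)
The plan is to mimic the approach used in Section~\ref{new-proof-1}: express the entropy as an expectation over a probability space that does not depend on the parameter $t$, differentiate under the expectation, and then use Gaussian integration by parts to massage the result into Fisher information. Write $W_t=X+\sqrt{t}\,Z$ with density
$$
f_{W_t}(w)=\int_{\R} f_X(x)\,\frac{1}{\sqrt{2\pi t}}\,e^{-(w-x)^2/(2t)}\,dx,
$$
and express
$$
H(W_t)=-\E[\log f_{W_t}(W_t)]=-\int_{\R}\!\int_{\R} f_X(x)\,f_Z(z)\,\log f_{W_t}(x+\sqrt{t}\,z)\,dx\,dz,
$$
so that the measure $f_X(x)f_Z(z)\,dx\,dz$ is independent of $t$.

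First I would interchange $d/dt$ with the outer integral. The integrand $\log f_{W_t}(x+\sqrt{t}\,z)$ depends on $t$ in two ways: through the subscript on $f_{W_t}$ and through the argument $x+\sqrt{t}\,z$. The chain rule therefore gives
$$
\frac{d}{dt}H(W_t)=-\E\!\left[\frac{\partial_t f_{W_t}(W_t)}{f_{W_t}(W_t)}\right]-\frac{1}{2\sqrt{t}}\,\E\!\left[Z\,\frac{f'_{W_t}(W_t)}{f_{W_t}(W_t)}\right],
$$
where $f'_{W_t}$ denotes the spatial derivative. The first expectation equals $\int \partial_t f_{W_t}(w)\,dw=\partial_t\!\int f_{W_t}(w)\,dw=0$, so only the second term survives.

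For the second term, the key observation is that $Z$ is standard normal and independent of $X$, so Stein's identity applies: for a sufficiently regular function $g$,
$$
\E[Z\,g(X+\sqrt{t}\,Z)]=\sqrt{t}\,\E[g'(X+\sqrt{t}\,Z)].
$$
Taking $g=f'_{W_t}/f_{W_t}$ (the score of $W_t$) converts the factor $Z$ into a derivative of the score. One final integration by parts with respect to $w$,
$$
\int_{\R} f_{W_t}(w)\,\frac{d}{dw}\!\left(\frac{f'_{W_t}(w)}{f_{W_t}(w)}\right)dw=-\int_{\R}\frac{(f'_{W_t}(w))^2}{f_{W_t}(w)}\,dw=-J(W_t),
$$
then produces $-J(W_t)$. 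Combining, the prefactor $-1/(2\sqrt{t})$ meets $\sqrt{t}\cdot(-J(W_t))$ and yields exactly $\tfrac{1}{2}J(W_t)$.

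The routine calculations are elementary; the main obstacle is justifying the two interchanges (differentiation under the expectation, and the two integration-by-parts steps requiring the boundary terms $f_{W_t}(w)\,f'_{W_t}(w)/f_{W_t}(w)$ to vanish at $\pm\infty$). These are the sort of tail/regularity conditions that the authors announce they defer to the appendices, so I would simply note the required integrability conditions on $f_X$ and proceed, deferring the technical verification.
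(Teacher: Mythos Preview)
Your proof is correct and shares the paper's guiding idea---writing $H(W_t)$ as an expectation over the $t$-independent space of $(X,Z)$---but the computation that follows is genuinely different from the paper's. The paper differentiates $f_Y(Y)=\int \frac{f_X(x)}{\sqrt{2\pi t}}e^{-(X+\sqrt{t}Z-x)^2/(2t)}\,dx$ in one stroke, converts everything via Bayes' rule into conditional expectations $\E[X\mid Y]$, and obtains the closed form
\[
\frac{d}{dt}H(Y)=\frac{-\E[X^2]+\E[\E^2[X\mid Y]]}{2t^2}+\frac{1}{2t};
\]
it then computes $J(Y)$ separately in the same conditional-expectation language and checks the two expressions coincide. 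Your route instead splits the $t$-derivative into the ``density'' piece and the ``evaluation point'' piece, kills the first by $\int\partial_t f_{W_t}=0$, and handles the second with Stein's identity plus one spatial integration by parts, arriving at $\tfrac12 J(W_t)$ directly without ever invoking $\E[X\mid Y]$ or computing both sides independently. What the paper's approach buys is the extra identity noted in its remark, $\tfrac12 J(Y)=\tfrac{1}{t^2}\E[(Y-\E[X\mid Y])^2]$, which your argument does not produce; what your approach buys is a shorter path that avoids matching two separate calculations. The technical caveats you flag (interchange of differentiation and expectation, vanishing boundary terms, regularity of the score for Stein) are exactly the sort the paper defers, so your handling of them is appropriate in this context.
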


\begin{proof}
First of all, define
\begin{equation*}
Y=X+\sqrt{t} Z,
\end{equation*}
whose density function can be computed as
\begin{equation*}
f_Y(y)=\int_{\mathbb{R}} f_X(x) f_{Y|X}(y|x)dx = \int_{\mathbb{R}} \frac{f_X(x)}{\sqrt{2\pi t}} e^{-(y-x)^2/(2t)} dx.
\end{equation*}
Here, to prevent possible confusion, we remark that $Y$ defined as above should be regarded as ``local'' to this proof, as the same notation is used
to denote the output of Gaussian channels elsewhere in this paper. Immediately, we have
\begin{equation*}
f_Y(Y)=f_Y(X+\sqrt{t}Z)=\int_{\mathbb{R}} \frac{f_X(x)}{\sqrt{2\pi t}} e^{-(X+\sqrt{t}Z-x)^2/(2t)} dx.
\end{equation*}
Now, taking the derivative with respect to $t$, we obtain
\begin{align*}
\frac{d}{dt}  f_Y(Y) & \stackrel{(a)}{=} \int_{\mathbb{R}} \frac{f_X(x)}{\sqrt{2\pi t}} e^{-(X+\sqrt{t}Z-x)^2/(2t)} \left(\frac{(X-x)(X+\sqrt{t}Z-x)}{2t^2}-\frac{1}{2 t} \right)dx\\
&= \int_{\mathbb{R}} \left(\frac{(X-x)(X+\sqrt{t}Z-x)}{2t^2}-\frac{1}{2 t} \right) f_{Y|X}(Y|x) f_X(x) dx\\
&=f_Y(Y) \int_{\mathbb{R}} \left( \frac{(X-x)(Y-x)}{2 t^2}-\frac{1}{2t} \right) f_{X|Y}(x|Y) dx,
\end{align*}
where $(a)$ will be justified in Appendix~\ref{deBruijn}. It then follows that
\begin{align*}
\frac{d}{dt}  H(Y) &= -\frac{d}{dt} \E\left[\log f_Y(Y) \right]\\
&\stackrel{(b)}{=}-\E\left[\frac{1}{f_Y(Y)}\frac{d}{dt}f_Y(Y)\right] \\
&=\E\left[\int_{\mathbb{R}} \left(-\frac{(X-x)(Y-x)}{2t^2}+\frac{1}{2t} \right) f_{X|Y}(x|Y) dx \right]\\
&=\frac{\E[-XY+(X+Y) \E[X|Y]-\E[X^2|Y]]}{2t^2}+\frac{1}{2t}\\
&=\frac{-\E[X^2]+\E[\E^2[X|Y]]}{2t^2}+\frac{1}{2t}\\
&\stackrel{(c)}{=}\frac{-\E[X^2]+\E[\E^2[X|Y]]+\E[(X-Y)^2]}{2t^2}\\
&=\frac{\E[\E^2[X|Y]]+\E[Y^2]-2\E[XY]}{2 t^2},
\end{align*}
where $(b)$ will be justified in Appendix~\ref{deBruijn} and we have used the fact that $t=\E[(X-Y)^2]$ in $(c)$. On the other hand, similarly as above, we derive
\begin{equation*}
f'_Y(Y)=\int_{\mathbb{R}} \frac{f_X(x)}{\sqrt{2 \pi t}} e^{-(Y-x)^2/(2t)} \frac{x-Y}{t} dx = f_Y(Y) \int_{\mathbb{R}} \frac{x-Y}{t} f_{X|Y}(x|Y)  dx,
\end{equation*}
where $f'_Y(\cdot)$ means the derivative of the function of $f_Y(\cdot)$ with respect to its parameter. It then follows that
\begin{align*}
J(Y)&=\E\left[\left(\frac{f_Y'(Y)}{f_Y(Y)}\right)^2 \right]\\
&= \frac{\E[\E^2[X|Y]+Y^2-2\E[X|Y]Y]}{t^2}\\
&=\frac{\E[\E^2[X|Y]]+\E[Y^2]-2\E[XY]}{t^2},
\end{align*}
which immediately implies the desired (\ref{entropy-fisher-information}).
\end{proof}

\begin{rem}
The new proof of de Bruijn's identity actually further reveals that
\begin{equation*}
\frac{d}{dt} H(X+\sqrt{t}Z)=\frac{1}{2} J(X+\sqrt{t}Z)=\frac{1}{2 t^2} \E[(Y-\E[X|X+\sqrt{t}Z])^2].
\end{equation*}
\end{rem}

\section{The Extended I-MMSE Relationship in Discrete Time}  \label{main-results-1}

In this section, using the ideas and techniques illustrated in Section~\ref{new-proofs}, we give extensions of the I-MMSE relationship (\ref{I-MMSE}) to channels with feedback and/or memory.

We start with the following general theorem on a discrete-time system:
\begin{thm} \label{discrete-time-main-result}
Consider the following discrete-time system
\begin{equation} \label{discrete-time-system}
Y_i={\rho}g_i(W_1^i, Y_1^{i-1})+Z_i,\quad i=1, \ldots, n,
\end{equation}
where $\rho \in \mathbb{R}$, all $W_i$ are independent of all $Z_i$, which are i.i.d. standard normal random variables and each $g_i(\cdot, \cdot)$ is a deterministic function differentiable in its second parameter. Assume that for any $i$ and any compact subset $K \subset \mathbb{R}$,
\begin{equation} \label{discrete-condition-d-1}
\E\left[\sup_{\rho \in K} g_i^2(W_1^i, Y_1^{i-1})\right] < \infty, \quad \E\left[\sup_{\rho \in K} \left(\frac{d}{d \rho}g_i(W_1^i, Y_1^{i-1})\right)^2\right] < \infty,
\end{equation}
\begin{equation} \label{discrete-condition-d-2}
\E\left[\sup_{\rho \in K} \E_W^2\left[\left.\frac{d}{d\rho} g_i(W_1^i, Y_1^{i-1})\right|Y_0^T\right] \right] < \infty,
\end{equation}
where
$$
\E_w\left[\left.\frac{d}{d\rho} g_i(W_1^i, Y_1^{i-1})\right|Y_0^T\right] \triangleq \E \left[\left.\frac{d}{d\rho} g_i(w_1^i, Y_1^{i-1})\right|Y_0^T\right].
$$
Then we have
\begin{align} \label{to-be-symmetrized}
\frac{d}{d\rho} I(W_1^n; Y_1^n) & =\rho \sum_{i=1}^n \E\left[(g_i-\E[g_i|Y_1^n])^2\right]+\rho^2 \sum_{i=1}^n \E\left[g_i \E_W \left[\left. \frac{d}{d\rho} g_i \right|Y_1^n \right]-\E[g_i|Y_1^n] \frac{d}{d\rho} g_i\right] \notag\\
&\hspace{3cm}+ \rho \sum_{i=1}^n \E\left[Y_i \left(\frac{d}{d\rho} g_i-\E_W \left[\left. \frac{d}{d\rho} g_i \right|Y_1^n \right] \right) \right],
\end{align}
where we have simply written $g_i(W_1^i, Y_1^{i-1})$ simply as $g_i$, $\E_W\left[\left. \displaystyle{\frac{d}{d\rho} g_i}(W_1^i, Y_1^{i-1})\right|Y_0^T\right]$ simply as $\E_W\left[\left.\displaystyle{\frac{d}{d\rho} g_i} \right|Y_0^T\right]$.
\end{thm}

\begin{proof}
Note that
\begin{equation*}
I(W_1^n; Y_1^n) = H(Y_1^n)-\sum_{i=1}^n H(Y_i|W_1^n, Y_1^{i-1}) = H(Y_1^n)- nH(Z_1),
\end{equation*}
which immediately implies
\begin{equation}  \label{Shamai-0}
\frac{d}{d\rho} I(W_1^n; Y_1^n) =-\frac{d}{d\rho} \E\left[\log f_{Y_1^n}(Y_1^n)\right] \stackrel{(a)}{=} -\E\left[\frac{d}{d\rho} \log f_{Y_1^n}(Y_1^n)\right]=-\E\left[\frac{1}{f_{Y_1^n}(Y_1^n)}\frac{d}{d\rho}f_{Y_1^n}(Y_1^n)\right],
\end{equation}
where $(a)$ will be justified in Appendix~\ref{discrete-interchange}.

In the remainder of the proof, we will omit the subscripts of the density functions. For instance, $f(y_1^n)$ means the density function of $Y_1^n$, $f(Y_1^n)$ means the density function of $Y_1^n$ evaluated at $Y_1^n$, $f(y_1^n|w_1^n)$ means the conditional density function of $Y_1^n$ given $W_1^n=w_1^n$.

Under the system assumptions, we have
\begin{equation*}
f(y_1^n|w_1^n) = \prod_{i=1}^n f(y_i|y_1^{i-1}, w_1^n)=\frac{1}{(\sqrt{2\pi})^n} \prod_{i=1}^n \exp\{-(y_i-\rho g_i(w_1^i, y_1^{i-1}))^2/2\},
\end{equation*}
and furthermore,
\begin{align*}
\frac{d}{d\rho} f(Y_1^n|w_1^n)
&= \frac{1}{(\sqrt{2\pi})^n} \frac{d}{d\rho} \prod_{i=1}^n \exp\{-(Y_i-\rho g_i(w_i, Y_1^{i-1}))^2/2\} \\
&= \frac{1}{(\sqrt{2\pi})^n}  \frac{d}{d\rho} \prod_{i=1}^n \exp\{-(\rho g_i(W_1^i, Y_1^{i-1})-\rho g_i(w_1^i, Y_1^{i-1})+Z_i)^2/2\} \\
&= -f(Y_1^n|w_1^n)\sum_{i=1}^n (Y_i-\rho g_i(w_1^i, Y_1^{i-1})) \bigg(g_i(W_1^i, Y_1^{i-1})-g_i(w_1^i, Y_1^{i-1})
\\
& \quad \quad +\rho \frac{d}{d\rho} (g_i(W_1^i, Y_1^{i-1})- g_i(w_1^i, Y_1^{i-1}))\bigg).
\end{align*}
It then follows that
\begin{align*}
\frac{d}{d\rho}f(Y_1^n) &= \frac{d}{d\rho}\int_{\R^n}  f(Y_1^n|w_1^n) f(w_1^n)dw_1^n\\
&\stackrel{(b)}{=} \int_{\R^n} \frac{d}{d\rho} f(Y_1^n|w_1^n) f(w_1^n)dw_1^n\\
&= -\int_{\R^n} \sum_{i=1}^n (Y_i-\rho g_i(w_1^i, Y_1^{i-1})) \bigg(g_i(W_1^i, Y_1^{i-1})-g_i(w_1^i, Y_1^{i-1})\\
&\quad\quad \quad \quad\quad\quad  +\rho \frac{d}{d\rho} (g_i(W_1^i, Y_1^{i-1})- g_i(w_1^i, Y_1^{i-1}))\bigg) f(Y_1^n|w_1^n) f(w_1^n)dw_1^n\\
&= -f(Y_1^n)\int_{\R^n}\sum_{i=1}^n (Y_i-\rho g_i(w_1^i, Y_1^{i-1}) \bigg( g_i(W_1^i, Y_1^{i-1})-g_i(w_1^i, Y_1^{i-1})\\
&\quad\quad \quad \quad\quad\quad  +\rho \frac{d}{d\rho} (g_i(W_1^i, Y_1^{i-1})- g_i(w_1^i, Y_1^{i-1}))\bigg) f(w_1^n|Y_1^n) dw_1^n,
\end{align*}
where $(b)$ will be justified in Appendix~\ref{discrete-interchange}. Writing $g_i(W_1^i,Y_1^{i-1}), g_i(w_1^i, Y_1^{i-1})$ as $g_i, \tilde{g}_i$ respectively, we have
\begin{align}
\frac{d}{d\rho}f(Y_1^n) & =-f(Y_1^n)\sum_{i=1}^n \int_{\R^n} (Y_i-\rho \tilde g_i)\left((g_i+\rho\frac{d}{d\rho}g_i)-(\tilde g_i+ \rho\frac{d}{d\rho}\tilde g_i)\right) f(w_1^n|Y_1^n)dw_1^n \nonumber \\
& \hspace{-3cm} =-f(Y_1^n)\sum_{i=1}^n \left((g_i+\rho\frac{d}{d\rho}g_i) (Y_i-\rho \E[g_i|Y_1^n])-\E\left[ \left.(Y_i- \rho g_i) g_i \right|Y_1^n\right]-\rho \int_{\mathbb{R}^n} f(w_1^n|Y_1^n) (Y_i-\rho \tilde{g}_i) \frac{d}{d\rho}\tilde{g}_i  dw_1^n \right), \label{Shamai-1}
\end{align}
where we have used the fact that for any measurable function $\varphi$,
\begin{equation*}
\int_{\R^n}\varphi(w_1^n, Y_1^{n}) f(w_1^n|Y_1^n)dw_1^n=\E[\varphi(W_1^n, Y_1^n)|Y_1^n].
\end{equation*}
Using the shorthand nation (below note that for fixed $w_1^i$, $g_i(w_1^i, Y_1^{i-1})$ is a deterministic function of $Y_1^{i-1}$, but $\frac{d}{d\rho} g_i(w_1^i, Y_1^{i-1})$ may \textbf{not} be a deterministic function of $Y_1^{i-1}$)
\begin{equation}  \label{bracket-drho}
\left[\frac{d}{d\rho} g_i \right](w_1^i, y_1^{i-1}) \triangleq \E\left[\left. \frac{d}{d\rho} g_i(w_1^i, Y_1^{i-1}) \right|Y_1^n=y_1^n \right],
\end{equation}
and plugging (\ref{Shamai-1}) into (\ref{Shamai-0}), we continue as in the proof of (\ref{rho-I-MMSE}) to obtain
{\small \begin{align*}
\hspace{-2cm} \frac{d}{d\rho} I(W_1^n; Y_1^n)
&= \sum_{i=1}^n \left(\E \left[(g_i+\rho\frac{d}{d\rho}g_i)(Y_i-\rho\E\left[ \left.  g_i \right|Y_1^n\right])\right]-\E\left[ g_i (Y_i-\rho g_i)\right]-\rho \E\left[\int_{\mathbb{R}^n} f(w_1^n|Y_1^n) (Y_i-\rho \tilde{g}_i) \E\left[\left.\frac{d}{d\rho} \tilde{g}_i\right|Y_1^n \right] dw_1^n \right]\right)\\
&= \sum_{i=1}^n \left(\E \left[(g_i+\rho\frac{d}{d\rho}g_i)(Y_i-\rho\E\left[ \left.  g_i \right|Y_1^n\right])\right]-\E\left[ g_i (Y_i-\rho g_i)\right]-\rho \E\left[\E\left[ \left.(Y_i-\rho g_i) \left[\frac{d}{d\rho} g_i\right]\right|Y_1^n \right]\right] \right)\\
&= \sum_{i=1}^n \left(\E \left[(g_i+\rho\frac{d}{d\rho}g_i)(Y_i-\rho\E\left[ \left.  g_i \right|Y_1^n\right])\right]-\E\left[ g_i (Y_i-\rho g_i)\right]-\rho \E\left[(Y_i-\rho g_i) \left[\frac{d}{d\rho} g_i\right]\right] \right)\\
&=\sum_{i=1}^n \left(\rho \E[g_i^2-g_i \E\left[g_i|Y_1^n] \right] + \rho \E\left[Y_i \left(\frac{d}{d\rho} g_i-\left[\frac{d}{d\rho} g_i\right] \right) \right]+\rho^2 \E\left[g_i \left[\frac{d}{d\rho} g_i \right]-\E[g_i|Y_1^n] \frac{d}{d\rho} g_i\right]\right)\\
&=\sum_{i=1}^n \left(\rho \E[g_i^2-\E^2\left[g_i|Y_1^n] \right] + \rho \E\left[Y_i \left(\frac{d}{d\rho} g_i-\left[\frac{d}{d\rho} g_i\right] \right) \right]+\rho^2 \E\left[g_i \left[\frac{d}{d\rho} g_i \right]-\E[g_i|Y_1^n] \frac{d}{d\rho} g_i\right]\right)\\
&=\sum_{i=1}^n \left(\rho \E[(g_i-\E\left[g_i|Y_1^n])^2 \right] + \rho \E\left[Y_i \left(\frac{d}{d\rho} g_i-\left[\frac{d}{d\rho} g_i\right] \right) \right]+\rho^2 \E\left[g_i \left[\frac{d}{d\rho} g_i \right]-\E[g_i|Y_1^n] \frac{d}{d\rho} g_i\right]\right)\\
&\stackrel{(c)}{=}\rho \sum_{i=1}^n \E[(g_i-\E\left[g_i|Y_1^n])^2 \right] + \rho^2 \sum_{i=1}^n \E\left[g_i \left[\frac{d}{d\rho} g_i \right]-\E[g_i|Y_1^n] \frac{d}{d\rho} g_i\right]+ \rho \sum_{i=1}^n \E\left[Y_i \left(\frac{d}{d\rho} g_i-\left[\frac{d}{d\rho} g_i\right] \right) \right],
\end{align*}}
where $(c)$ is due to the orthogonality principle.
\end{proof}

\begin{rem}
A subtle point is that in general
$$
\E_W\left[\left.\frac{d}{d\rho} g_i(W_1^i, Y_1^{i-1})\right|Y_0^T\right] \neq \E\left[\left.\frac{d}{d\rho} g_i(W_1^i, Y_1^{i-1})\right|Y_0^T\right],
$$
since, by definition, we have
$$
\E_W\left[\left.\frac{d}{d\rho} g_i(W_1^i, Y_1^{i-1})\right|Y_0^T\right]= \left[\frac{d}{d\rho} g_i \right](W_1^i, Y_1^{i-1}),
$$
where $\left[\displaystyle{\frac{d}{d\rho} g_i} \right]$ is defined in (\ref{bracket-drho}).
\end{rem}

\begin{rem}
For each $i$, $g_i(W_1^i, Y_1^{i-1})$ may depend on $\rho$ through its second parameter $Y_1^{i-1}$, which obviously depends on $\rho$. Theorem~\ref{discrete-time-main-result} still holds true even if the function $g_i(\cdot, \cdot)$ itself is parameterized by $\rho$, which, with $\rho^2$ interpreted as SNR, can be of use in applications involving power adjusting schemes. Note that when $g_i(W_1^i, Y_1^{i-1})$ does not depend on $\rho$, the second inequality in (\ref{discrete-condition-d-1}) and (\ref{discrete-condition-d-2}) are vacuously true, and the first inequality boils down to the usual average power constraint.

It is very conceivable that Conditions (\ref{discrete-condition-d-1}) and (\ref{discrete-condition-d-2}) will hold true for $W_1^i$ with ``commonly used'' distribution and most ``practical'' $g_i$; in particular, it is true when each $W_1^i$ is Gaussian distributed and each $g_i$ is a linear function of its parameters.
\end{rem}

\begin{rem} \label{extension-feedback}
Consider the discrete-time system as in (\ref{discrete-time-system}). Rewriting all $W_i$ as $M$ and each $g_i$ as $X_i$, we then have the following discrete-time Gaussian channel with feedback:
\begin{equation} \label{interpretation-1}
Y_i=\sqrt{snr} X_i(M, Y_1^{i-1})+Z_i,\quad i=1, 2, \ldots, n
\end{equation}
where $M$ is interpreted as the message to be transmitted and $X_i, Y_i$ are the channel inputs, outputs, respectively. It is well known that for such a feedback channel,
\begin{equation*}
I(X_1^n \rightarrow Y_1^n)=I(M; Y_1^n),
\end{equation*}
where $I(X_1^n \rightarrow Y_1^n)$ is the directed information~\cite{massey90} between $X_1^n$ and $Y_1^n$. Then, applying Theorem~\ref{discrete-time-main-result} and the chain rule for taking derivative
\begin{equation*}
\frac{d}{d\rho}=\frac{1}{2 \sqrt{snr}} \frac{d}{dsnr}
\end{equation*}
twice, we have
\begin{align} \label{extended-formula-feedback}
\hspace{-1cm} \frac{d}{d snr} I(X_1^n \rightarrow Y_1^n) & = \frac{1}{2} \sum_{i=1}^n \E[(X_i-\E\left[X_i|Y_1^n])^2 \right] + snr \sum_{i=1}^n \E\left[X_i \E_W\left[\left.\frac{d}{d snr} X_i \right|Y_1^n \right]-\E[X_i|Y_1^n] \frac{d}{d snr} X_i\right] \notag\\
&\hspace{2cm}+\sqrt{snr} \sum_{i=1}^n \E\left[Y_i \left(\frac{d}{d snr} X_i-\E_W\left[\left.\frac{d}{d snr} X_i \right|Y_1^n\right] \right) \right],
\end{align}
where $X_i=X_i(M, Y_1^{i-1})$. This yields an extension of the I-MMSE relationship to discrete-time Gaussian channels with feedback.
\end{rem}

\begin{rem} \label{extension-memory}
Alternatively, rewriting each $W_i$ as $X_i$, we will have the following discrete-time Gaussian channel with input and output memory (it is observed that such a channel is suitable for modeling some storage systems, such as flash memories~\cite{me14}):
\begin{equation*}
Y_i=\sqrt{snr} g_i(X_1^i, Y_1^{i-1})+Z_i,\quad i=1, 2, \ldots, n
\end{equation*}
where $g_i$ is interpreted as ``part'' of the channel and $X_i, Y_i$ are the channel inputs, outputs, respectively. Then, by Theorem~\ref{discrete-time-main-result} and the chain rule, we obtain
\begin{align} \label{extended-formula-memory}
\hspace{-1cm} \frac{d}{d snr} I(X_1^n; Y_1^n) & = \frac{1}{2} \sum_{i=1}^n \E[(g_i-\E\left[g_i|Y_1^n])^2 \right] + snr \sum_{i=1}^n \E\left[g_i \E_W\left[\left.\frac{d}{d snr} g_i \right|Y_1^n \right]-\E[X_i|Y_1^n] \frac{d}{d snr} g_i\right] \notag\\
&\hspace{2cm}+\sqrt{snr} \sum_{i=1}^n \E\left[Y_i \left(\frac{d}{d snr} g_i-\E_W\left[\left.\frac{d}{d snr} g_i \right|Y_1^n\right] \right) \right]
\end{align}
where $g_i=g_i(X_1^i, Y_1^{i-1})$. This yields an extension of the I-MMSE relationship to discrete-time Gaussian channels with input and output memory.
\end{rem}

\begin{rem}
Consider the Gaussian feedback channel (\ref{interpretation-1}) satisfying the average power constraint: $\sum_{i=1}^n \E[X_i^2]/n \leq 1$ for all $n$. It has been established by Cover and Pombra~\cite{co1989} that the channel inputs taking the following form can achieve the capacity as $n$ tends to infinity:
\begin{equation} \label{c-p-k-1}
X_1^n=V_1^n+ B_n Y_1^n,
\end{equation}
where $V_1^n$, $Y_1^n$ are jointly Gaussian and $B_n$ is a strictly lower triangular matrix. It can be checked that for such a coding scheme, the right hand side of (\ref{extended-formula-feedback}) boils down to 
$$
\frac{1}{2} \sum_{i=1}^n \E[(X_i-\E\left[X_i|Y_1^n])^2 \right] + snr \sum_{i=1}^n \E\left[(X_i -\E[X_i|Y_1^n]) \frac{d}{d snr} V_i\right],
$$ 
Moreover, we note that Theorem $4.1$ in~\cite{ki10} implies that for the purpose of achieving the capacity, one can choose $V_1^n$ such that its power is ``close'' to $0$, which in turn implies that
\begin{equation}  \label{c-p-k-3}
\frac{\sum_{i=1}^n \E\left[\left(X_i-\E[X_i|Y_1^n]\right)\frac{d}{d snr} V_i\right]}{n} \mbox{ is ``close'' to $0$}.
\end{equation}
It then follows that for the channel (\ref{interpretation-1}) operating at a fixed SNR, say, $snr_0$,
{\small \begin{align}
\hspace{-2.2cm} I(M; Y_1^{(snr_0),n}) & = \int_0^{snr_0} \frac{1}{2} \sum_{i=1}^n \E\left[(X_i^{(snr)}-\E[X_i^{(snr)}|Y_1^{(snr),n}])^2\right]+snr \sum_{i=1}^n \E\left[\left(X_i^{(snr)}-\E[X_i^{(snr)}|Y_1^{(snr),n}]\right)\frac{d}{d snr} V_i^{(snr)}\right] d snr \\
& \approx \int_0^{snr_0} \frac{1}{2} \sum_{i=1}^n \E\left[(X_i^{(snr)}-\E[X_i^{(snr)}|Y_1^{(snr),n}])^2\right] d snr\\
& \stackrel{(a)}{\leq} \int_0^{snr_0} \frac{1}{2} \sum_{i=1}^n \E\left[(X_i^{(snr)}-\E[X_i^{(snr)}|Y_i^{(snr)}])^2\right] d snr \\
& \stackrel{(b)}{\leq} \frac{1}{2} \log (1+ snr_0),
\end{align}}
where we have used parenthesized superscripts ${}^{(snr_0)}, {}^{(snr)}$ to indicate the underlying parameter, $(a)$ is due to Proposition $11$ in~\cite{gu11}, and $(b)$ is due to Proposition $13$ in~\cite{gu11} and the concavity of the $\log$ function.

Note that when there is no feedback, it is well known that the capacity of the channel (\ref{interpretation-1}) operating at SNR $snr_0$ is $\frac{1}{2}\log(1+snr_0)$. So, making use of the extended I-MMSE relation (\ref{extended-formula-feedback}), we have recovered the well-known fact that feedback does not increase the capacity of a white Gaussian channel. Though it exemplifies a way to ``tame'' the correctional term by restricting attention to appropriately chosen encoding schemes, the above argument is heuristic in nature: there is a major technical gap when applying Theorem $4.1$ in~\cite{ki10}, where the result is stated in a ``mutual information rate'' form, as opposed to the ``$n$-block mutual information'' form in the extended I-MMSE relation (\ref{extended-formula-feedback}). A completely rigorous treatment may require a limiting version of (\ref{extended-formula-feedback}), which has been listed as one of possible future directions; see Section~\ref{further-extensions}.
\end{rem}

\section{The Extended I-MMSE Relationship in Continuous Time}  \label{main-results-2}

As elaborated in the following theorem, the continuous-time I-MMSE relationship, the continuous-time analog of (\ref{I-MMSE}), has been established in~\cite{gu05}.

\begin{thm}[Theorem $6$ of~\cite{gu05}] \label{continuous-time-I-MMSE}
Consider the following continuous-time Gaussian channel
\begin{equation*}
Y(t) = \sqrt{snr} \int_0^t X(s) ds + B(t), \quad t \in [0, T],
\end{equation*}
where $\{X(s)\}$ is the channel input satisfying the power constraint
\begin{equation} \label{square-integrable}
\int_0^T \E[X^2(s)] ds < \infty
\end{equation}
and $\{B(t)\}$ is the standard Brownian motion. Then, we have
\begin{equation} \label{continuous-I-MMSE}
\frac{d}{d snr} I(X_0^T; Y_0^T) =\frac{1}{2} \int_0^T \E[\left(X(s)-\E[X(s)|Y_0^T]\right)^2] ds.
\end{equation}
\end{thm}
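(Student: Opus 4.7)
The plan is to derive this continuous-time identity as a limit of the discrete-time result Theorem~\ref{discrete-time-main-result}. Because the channel has no feedback (the input $X(s)$ is a functional of the source $W_0^T$ alone), the correctional term in (\ref{to-be-symmetrized}) will vanish identically throughout the approximations, so the task reduces to producing the quadratic MMSE integral in the limit and then justifying the interchange of that limit with the $\rho$-derivative.

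Set $\rho := \sqrt{snr}$, partition $[0,T]$ into $n$ subintervals $[t_{i-1}, t_i]$ of equal length $\Delta = T/n$, and form the normalized increments
$$
\tilde Y_i \;:=\; \frac{Y(t_i)-Y(t_{i-1})}{\sqrt{\Delta}} \;=\; \rho\, \tilde g_i + Z_i, \qquad \tilde g_i \;:=\; \frac{1}{\sqrt{\Delta}}\int_{t_{i-1}}^{t_i} X(s)\,ds,
$$
with $Z_i := (B(t_i)-B(t_{i-1}))/\sqrt{\Delta}$ i.i.d.\ standard normal. The $\tilde g_i$ are functionals of $W_0^T$ only (no $Y$'s, no $\rho$); Cauchy-Schwarz gives $\E[\tilde g_i^2] \le \int_{t_{i-1}}^{t_i}\E[X^2(s)]\,ds$, so (\ref{square-integrable}) verifies the first half of (\ref{discrete-condition-d}) and $\frac{d}{d\rho}\tilde g_i \equiv 0$ trivially verifies the second. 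Theorem~\ref{discrete-time-main-result} therefore collapses to
$$
\frac{d}{d\rho} I(W_0^T; \tilde Y_1^n) \;=\; \rho \sum_{i=1}^n \E\!\left[(\tilde g_i - \E[\tilde g_i | \tilde Y_1^n])^2\right].
$$

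Next I would let $n \to \infty$. Sample-path continuity of $Y$ gives $\sigma(\tilde Y_1^n) = \sigma(Y(t_1), \ldots, Y(t_n)) \uparrow \sigma(Y_0^T)$, whence $I(W_0^T; \tilde Y_1^n) \nearrow I(W_0^T; Y_0^T)$. For the MMSE sum, the Pythagorean decomposition
$$
\E[(\tilde g_i - \E[\tilde g_i | \tilde Y_1^n])^2] \;=\; \E[(\tilde g_i - \E[\tilde g_i | Y_0^T])^2] \;+\; \E[(\E[\tilde g_i | Y_0^T] - \E[\tilde g_i | \tilde Y_1^n])^2]
$$
splits the sum. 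The second piece summed over $i$ is bounded above (Cauchy-Schwarz again) by $\int_0^T \|\E[X(s)|Y_0^T] - \E[X(s)|\tilde Y_1^n]\|_2^2\, ds$, which tends to $0$ as $n \to \infty$ by $L^2$ martingale convergence combined with dominated convergence (the integrand is controlled by $4\,\E[X^2(s)]$). The first piece equals $\sum_i \Delta^{-1} \int_{[t_{i-1}, t_i]^2} R(u, v)\,du\,dv$ for the error covariance $R(u, v) := \E[(X(u) - \E[X(u)|Y_0^T])(X(v) - \E[X(v)|Y_0^T])]$, which converges to $\int_0^T R(s,s)\, ds$ under mild regularity of $R$. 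Applying the chain rule $\frac{d}{d\, snr} = \frac{1}{2\rho}\frac{d}{d\rho}$ then produces the factor $\tfrac{1}{2}$ and yields (\ref{continuous-I-MMSE}).

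The obstacles are two. The primary one is the interchange of the limit in $n$ with the derivative in $\rho$, which is not automatic; the intended justification is locally uniform convergence of $\rho \mapsto I(W_0^T; \tilde Y_1^n)$ on compact subsets of $\rho$, for which the a-priori bound $\sum_i \E[(\tilde g_i - \E[\tilde g_i | \tilde Y_1^n])^2] \le \int_0^T \E[X^2(s)]\,ds$ supplied by (\ref{square-integrable}) is the key input. The secondary one is passing the Riemann sum in $R$ to the diagonal integral $\int_0^T R(s,s)\,ds$ for a merely $L^2$ error process; Cauchy-Schwarz gives $\sum_i \Delta^{-1} \E[(\int_{t_{i-1}}^{t_i} (X - \E[X|Y_0^T])\,ds)^2] \le \int_0^T \E[(X(s) - \E[X(s)|Y_0^T])^2]\,ds$, which yields an upper bound on the pre-limit sum, but matching this with the correct equality in the limit either requires additional regularity (mean-square continuity of $s \mapsto X(s) - \E[X(s)|Y_0^T]$, not provided by the power constraint alone) or a more delicate diagonal-approximation argument; this is where the remaining technical work concentrates.
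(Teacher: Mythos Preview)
Your route is genuinely different from the paper's. You discretize time, apply Theorem~\ref{discrete-time-main-result} to the block increments, and try to pass to the limit $n\to\infty$. The paper never discretizes: it works entirely in continuous time, expressing $I(W_0^T;Y_0^T)$ via the Radon--Nikodym derivatives $d\mu_{Y|W}/d\mu_B$ and $d\mu_Y/d\mu_B$ furnished by Girsanov-type results (Liptser--Shiryaev), differentiates those exponential martingales in $\rho$, and reads off the MMSE integral directly. Condition~(e) (boundedness of $g$) is what makes this legal in Theorem~\ref{continuous-time-main-result-1}; Theorem~\ref{continuous-time-main-result-2} then removes (e) by a truncation in the \emph{size} of $g$ (not in time), and Remark~\ref{rigorously-recover} observes that the non-feedback power-constrained channel of Theorem~\ref{continuous-time-I-MMSE} sits inside that framework. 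The advantage of the paper's path is that the continuous-time MMSE integral appears natively, with no Riemann-sum passage to justify.

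The obstacles you flag are real gaps, and the second one is the serious one. Under only the power constraint~(\ref{square-integrable}) the error process $s\mapsto X(s)-\E[X(s)\mid Y_0^T]$ need not be mean-square continuous, so your Riemann sum $\sum_i \Delta^{-1}\!\int_{[t_{i-1},t_i]^2} R(u,v)\,du\,dv$ genuinely need not converge to $\int_0^T R(s,s)\,ds$; Cauchy--Schwarz gives you only the upper bound, as you say, and for merely $L^2$ inputs the inequality can be strict in the limit. Without closing this, your limiting derivative is at best an inequality, not~(\ref{continuous-I-MMSE}). Incidentally, your scheme is close in spirit to the original argument in~\cite{gu05}, which the present paper explicitly criticizes (see the footnote near Remark~\ref{rigorously-recover}) for precisely this kind of unjustified limit/derivative interchange; the Girsanov route was adopted here in part to avoid those difficulties.
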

\noindent In this section, using the ideas and techniques illustrated in Section~\ref{new-proofs}, we give extensions of the continuous-time I-MMSE relationship to channels with feedback or memory.

We start with a general theorem on a continuous-time system:
\begin{thm} \label{continuous-time-main-result-1}
Consider a continuous-time system characterized by the following stochastic differential equation:
\begin{align}\label{continuous-time-system}
Y(t) = \rho\int_0^t g(s, W_0^s, Y_0^s) ds+B(t),\quad t \in[0, T],
\end{align}
where $\rho \in \mathbb{R}$, the stochastic process $\{W(t)\}$ is independent of the standard Brownian motion $\{B(t)\}$, and $g(\cdot, \cdot, \cdot)$ is a deterministic function. Assume that
\begin{itemize}
\item[(a)] $g(s, \gamma_0^s, \phi_0^s)$ is defined for all $\gamma(\cdot), \phi(\cdot) \in C[0, T]$, the set of all continuous functions over $[0, T]$;
\item[(b)] the solution $\{Y(t)\}$ to the stochastic differential equation (\ref{continuous-time-system}) uniquely exists;
\item[(c)] for any $s \in [0, T]$, $g(s, W_0^s, Y_0^s)$ is differentiable with respect to $\rho$ with probability $1$;
\item[(d)] for any compact subset $K \subset \mathbb{R}$, there exists a constant $\eps_0 > 0$ such that
\begin{equation} \label{d-1}
\int_0^T \E\left[\sup_{\rho \in K} g^2(s, W_0^s, Y_0^s) \right] ds < \infty, \quad \int_0^T \E\left[\sup_{\rho \in K, w_0^T \in C[0, T]} \left(\frac{d}{d \rho}g(s, w_0^s, Y_0^s)\right)^{8+\eps_0}\right] ds < \infty,
\end{equation}
and there exist two constants $C_1, \eps_1 > 0$ such that for any $\rho_1, \rho_2 \in \mathbb{R}$,
\begin{equation} \label{d-2}
\int_0^T \E\left[\sup_{w_0^T \in C[0, T]} \left.\left(\frac{d}{d \rho}g(s, w_0^s, Y_{0}^{s})\right|_{\rho=\rho_1}-\left.\frac{d}{d \rho}g(s, w_0^s, Y_0^{s})\right|_{\rho=\rho_2}\right)^{2+\eps_1}\right] \leq C_1 |\rho_1-\rho_2|^{2+\eps_1}.
\end{equation}
\item[(e)] there exists a constant $C_2 > 0$ such that for all all $\gamma(\cdot), \phi(\cdot) \in C[0, T]$,
\begin{equation*}
\int_0^T g^2(s, \gamma_0^s, \phi_0^s) ds < C_2.
\end{equation*}
\end{itemize}
Then, we have
\begin{align} \label{continuous-time-feedback-formula-1}
\hspace{-1.5cm} \frac{d}{d\rho} I(W_0^T; Y_0^T) &= \rho\int_0^T \E[(g(s)-\E[g(s)|Y_0^T])^2] ds+\rho^2 \int_0^T \E\left[ g(s) \E_W\left[\left.\frac{d}{d\rho} g(s)\right|Y_0^T \right]-\frac{d}{d\rho} g(s) \E[g(s)|Y_0^T]\right] ds \nonumber\\
&\hspace{2cm}+\rho^2\int_0^T \E\left[g(s) \frac{d}{d\rho} g(s) \right]ds-\rho \E\left[\left.\E_W\left[\int_0^T \frac{d}{d\rho} g(s) dY(s)\right|Y_0^T \right]\right],
\end{align}
where we have written $g(s, W_0^s,Y_0^{s})$ simply as $g(s)$ and defined:
$$
\E_w\left[\left.\frac{d}{d\rho} g(s, W_0^s, Y_0^s)\right|Y_0^T \right] \triangleq \E\left[\left.\frac{d}{d\rho} g(s, w_0^s, Y_0^s)\right|Y_0^T \right],
$$
$$
\E_w\left[\left. \int_0^T \frac{d}{d\rho} g(s, W_0^s, Y_0^s) dY(s)\right|Y_0^T \right] = \E \left[\left. \int_0^T \frac{d}{d\rho} g(s, w_0^s, Y_0^s) dY(s)\right|Y_0^T \right].
$$
\end{thm}

Strictly speaking, Theorem~\ref{continuous-time-main-result-1} is not a generalization of Theorem~\ref{continuous-time-I-MMSE}: Condition (e) is stronger than the square integrability condition (\ref{square-integrable}), as one can easily find $g$ satisfying the latter but not the former. Condition (e) will be relaxed in the following theorem, which ``essentially'' generalizes Theorem~\ref{continuous-time-I-MMSE}.

\begin{thm}   \label{continuous-time-main-result-2}
Consider the continuous-time system (\ref{continuous-time-system}) satisfying Conditions (a), (b), (c), (d) and the following conditions:
\begin{enumerate}
\item[(f)] for any constant $a > 0$ and any $t \in [0, T]$,
\begin{equation*}
P\left( \int_0^t g^2(s, W_0^s, Y_0^s) ds = a \right)=0;
\end{equation*}
\item[(g)] with probability $1$, we have (note that the third parameter in the following $g$ function is $B_0^s$, rather than $Y_0^s$)
\begin{equation*}
\int_0^T g^2(s, W_0^s, B_0^s) ds < \infty.
\end{equation*}
\end{enumerate}
Then, we have
\begin{align} \label{continuous-time-feedback-formula-2}
\hspace{-1.5cm} \frac{d}{d\rho} I(W_0^T; Y_0^T) &= \rho\int_0^T \E[(g(s)-\E[g(s)|Y_0^T])^2] ds+\rho^2 \int_0^T \E\left[ g(s) \E_W\left[\left.\frac{d}{d\rho} g(s)\right|Y_0^T \right]-\frac{d}{d\rho} g(s) \E[g(s)|Y_0^T]\right] ds \nonumber\\
&\hspace{2cm}+\rho^2\int_0^T \E\left[g(s) \frac{d}{d\rho} g(s) \right]ds-\rho \E\left[\left.\E_W\left[\int_0^T \frac{d}{d\rho} g(s) dY(s)\right|Y_0^T \right]\right],
\end{align}
where we have written $g(s, W_0^s,Y_0^{s})$ simply as $g(s)$.
\end{thm}

\begin{rem}
Similarly as in discrete time, Theorems~\ref{continuous-time-main-result-1} and~\ref{continuous-time-main-result-2} still hold if for each $s$, the function $g(s, \cdot, \cdot)$ depends on $\rho$.

As ponderous as it may seem, Condition (d) is in fact mild: All the three inequalities can be readily satisfied if $g$ is ``smooth'' enough, and the ``tail part'' of the function $g$ approaches to $0$ ``fast enough'', and so do those of the random elements $W$, $Y$; in particular, when $g$ does not depend on $\rho$, then the second and third inequalities in Condition (d) are vacuously true, and the first inequality boils down to the usual average power constraint.

Despite its deceivingly simple look, Condition (e), reminiscent of the peak power constraint, is somewhat restrictive. At the expense of the extra yet mild Condition (f), Condition (e) is relaxed to a much weaker square integrability Condition (g) in Theorem~\ref{continuous-time-main-result-2}.

Theorem~\ref{continuous-time-main-result-2} does not, however, subsume Theorem~\ref{continuous-time-main-result-1} due to the extra Condition (f), although it is a very weak condition: Condition (f) essentially says that for each $t > 0$, the random variable $\int_0^t g^2(s, W_0^s, Y_0^s) dt$ is ``purely continuous'' without any ``point mass''.
\end{rem}

\begin{rem}
Parallel to Remarks~\ref{extension-feedback}, the continuous-time system in (\ref{continuous-time-system}) can be interpreted as the following continuous-time Gaussian channel with feedback:
\begin{equation} \label{feedback-sampling-conjecture}
Y(t) = \sqrt{snr} \int_0^t X(s, M, Y_0^{s}) ds+B(t),\quad t \in[0, T].
\end{equation}
An application of Theorem~\ref{continuous-time-main-result-1} then yields
{\small \begin{align} \label{continuous-extended-formula-feedback}
\hspace{-2cm} \frac{d}{d snr} I(M; Y_0^T) & =\frac{1}{2} \int_0^T \E[(X(s)-\E[X(s)|Y_0^T])^2] ds+ snr \int_0^T \E\left[X(s) \E_W\left[\left.\frac{d}{d snr} X(s)\right|Y_0^T \right]-\frac{d}{d snr} X(s) \E[X(s)|Y_0^T]\right] ds \nonumber\\
&\hspace{2cm}+snr \int_0^T \E\left[X(s) \frac{d}{d snr} X(s) \right]ds-\sqrt{snr} \E\left[\left.\E_W\left[\int_0^T \frac{d}{d snr} X(s) dY(s)\right|Y_0^T \right]\right],
\end{align}}
where $X(s)$ is the abbreviated form of $X(s, M, Y_0^{s})$. This gives an extension of the I-MMSE relationship to continuous-time Gaussian channels with feedback. It is well shown~\cite{du70, ka71} that
\begin{equation} \label{KZZ}
I(M; Y_0^T)=\frac{snr}{2} \int_0^T \E[(\E[X(s)]-\E[X(s)|Y_0^s])^2] ds.
\end{equation}
This, together with (\ref{continuous-extended-formula-feedback}), gives that for a fixed $snr_0$,
{\small \begin{align*}
&snr_0 \int_0^T \E[(\E[X^{(snr_0)}(s)]-\E[X^{(snr_0)}(s)|Y_0^{(snr_0), s}])^2] ds\\
&=\int_0^{snr_0} \int_0^T \E[(X^{(snr)}(s)-\E[X^{(snr)}(s)|Y_0^{(snr), T}])^2] ds d snr\\
& \hspace{-1cm} + \int_0^{snr_0} 2 snr \int_0^T \E\left[X^{(snr)}(s) \E_W\left[\left.\frac{d}{d snr} X^{(snr)}(s)\right|Y_0^{(snr), T} \right]-\frac{d}{d snr} X^{(snr)}(s) \E[X^{(snr)}(s)|Y_0^{(snr), T}]\right] ds d snr \nonumber\\
&\hspace{-2cm} +\int_0^{snr_0} 2 snr \int_0^T \E\left[X^{(snr)}(s) \frac{d}{d snr} X^{(snr)}(s) \right]ds d snr- \int_0^{snr_0} 2\sqrt{snr} \E\left[\left.\E_W\left[\int_0^T \frac{d}{d snr} X^{(snr)}(s) dY^{(snr)}(s)\right|Y_0^{(snr), T} \right]\right] d snr,
\end{align*}}
which extends the relationship between the causal MMSE and non-causal MMSE obtained in Theorem $8$ of~\cite{gu05} to Gaussian feedback channels.

Parallel to Remark~\ref{extension-memory}, the continuous-time system in (\ref{continuous-time-system}) can also be interpreted as the following continuous-time Gaussian channel with input and output memory:
\begin{equation*}
Y(t) = \sqrt{snr} \int_0^t g(s, X_0^s, Y_0^{s}) ds+B(t),\quad t \in[0, T].
\end{equation*}
An application of Theorem~\ref{continuous-time-main-result-1} then yields
{\small \begin{align} \label{continuous-extended-formula-memory}
\hspace{-2cm} \frac{d}{d snr} I(X_0^T; Y_0^T) & =\frac{1}{2} \int_0^T \E[(g(s)-\E[g(s)|Y_0^T])^2] ds+ snr \int_0^T \E\left[g(s) \E_W\left[\left.\frac{d}{d snr} g(s)\right|Y_0^T \right]-\frac{d}{d snr} g(s) \E[g(s)|Y_0^T]\right] ds \nonumber\\
&\hspace{2cm}+snr \int_0^T \E\left[g(s) \frac{d}{d snr} g(s) \right]ds-\sqrt{snr} \E\left[\left.\E_W\left[\int_0^T \frac{d}{d snr} g(s) dY(s)\right|Y_0^T \right]\right],
\end{align}}
where $g(s)$ is the abbreviated form of $g(s, X_0^s, Y_0^{s})$. This gives an extension of the I-MMSE relationship to continuous-time Gaussian channels with input and output memory. And similarly as before, (\ref{KZZ}), together with (\ref{continuous-extended-formula-memory}), gives that for a fixed $snr_0$,
{\small \begin{align*}
&snr_0 \int_0^T \E[(\E[g^{(snr_0)}(s)]-\E[g^{(snr_0)}(s)|Y_0^{(snr_0), s}])^2] ds\\
&=\int_0^{snr_0} \int_0^T \E[(g^{(snr)}(s)-\E[g^{(snr)}(s)|Y_0^{(snr), T}])^2] ds d snr\\
& \hspace{-1cm} + \int_0^{snr_0} 2 snr \int_0^T \E\left[g^{(snr)}(s) \E_W\left[\left.\frac{d}{d snr} g^{(snr)}(s)\right|Y_0^{(snr), T} \right]-\frac{d}{d snr} g^{(snr)}(s) \E[g^{(snr)}(s)|Y_0^{(snr), T}]\right] ds d snr \nonumber\\
&\hspace{-2cm} +\int_0^{snr_0} 2 snr \int_0^T \E\left[g^{(snr)}(s) \frac{d}{d snr} g^{(snr)}(s) \right]ds d snr- \int_0^{snr_0} 2\sqrt{snr} \E\left[\left.\E_W\left[\int_0^T \frac{d}{d snr} g^{(snr)}(s) dY^{(snr)}(s)\right|Y_0^{(snr), T} \right]\right] d snr,
\end{align*}}
which extends the relationship between the causal MMSE and non-causal MMSE obtained in Theorem $8$ of~\cite{gu05} to Gaussian memory channels.
\end{rem}

\begin{rem}
It can be readily verified that Theorem~\ref{continuous-time-main-result-2}, when interpreted as (\ref{continuous-extended-formula-feedback}) or (\ref{continuous-extended-formula-memory}) in the previous remark, includes Theorem~\ref{continuous-time-I-MMSE} as a special case; see more detailed explanations in Remark~\ref{rigorously-recover}.
\end{rem}

\subsection{Properties of the solution to (\ref{continuous-time-system})}

In this section, we will give certain sufficient conditions that will guarantee the solution $Y$ to (\ref{continuous-time-system}) uniquely exists (Condition (b) in Theorem~\ref{continuous-time-main-result-1}), and moreover, $g(s, W_0^s, Y_0^s)$ is differentiable with respect to $\rho$ (Condition (c) in Theorem~\ref{continuous-time-main-result-1}). More precisely, we have the following proposition.
\begin{pr}
Under the following conditions:
\begin{itemize}
\item $D g(s, \gamma_0^s, \phi_0^s)$, the Frechet derivative of $g$ with respect to its third parameter $\phi(\cdot)$, exists for any $s \in [0, T]$ and any $\gamma(\cdot), \phi(\cdot) \in C[0, T]$;
\item (extended uniform Lipschitz conditions) There exists a constant $C$ such that for all $s \in [0, T]$ and all $\gamma(\cdot), \phi(\cdot), \psi(\cdot) \in C[0, T]$, we have
    \begin{equation} \label{c-1}
    |g(s, \gamma_0^s, \phi_0^s)-g(s, \gamma_0^s, \psi_0^s)| \leq C \|\phi_0^s-\psi_0^s\|_{\infty},
    \end{equation}
    and
    \begin{equation} \label{c-2}
    \|D g(s, \gamma_0^s, \phi_0^s)-D g(s, \gamma_0^s, \psi_0^s)\| \leq C \|\phi_0^s-\psi_0^s\|_{\infty};
    \end{equation}
\item (extended linear growth conditions) There exists a constant $C$ such that for all $s \in [0, T]$ and all $\gamma(\cdot), \phi(\cdot) \in C[0, T]$, we have
    \begin{equation} \label{c-3}
    g^2(s, \gamma_0^s, \phi_0^s) \leq C(1+ \|\gamma_0^s\|_{\infty}^2 +\|\phi_0^s\|_{\infty}^2),
    \end{equation}
    and
    \begin{equation} \label{c-4}
    \|D g(s, \gamma_0^s, \phi_0^s)\|^2 \leq C(1+ \|\gamma_0^s\|_{\infty}^2 +\|\phi_0^s\|_{\infty}^2),
    \end{equation}
\end{itemize}
the solution $Y$ to the continuous-time system (\ref{continuous-time-system}) uniquely exists, and moreover, with probability $1$, $g(s, W_0^s, Y_0^s)$ is differentiable with respect to $\rho$.
\end{pr}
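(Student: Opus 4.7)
Since $B(t)$ enters (\ref{continuous-time-system}) additively and the right-hand side contains no stochastic integral, the equation can be treated pathwise as a deterministic functional integral equation. The plan is to fix $\omega$ in the full-measure event on which $W(\cdot,\omega)$ and $B(\cdot,\omega)$ are continuous on $[0,T]$, and solve
$$
Y(t) = \rho \int_0^t g(s, W_0^s, Y_0^s)\, ds + B(t), \qquad t \in [0, T],
$$
by a classical Picard-type argument in $C[0,T]$.

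For existence and uniqueness of $Y$, I would apply the Banach fixed-point theorem to the operator $\Phi(Y)(t) := \rho \int_0^t g(s, W_0^s, Y_0^s)\, ds + B(t)$ on $C[0,T]$ equipped with the weighted sup-norm $\|Y\|_\lambda := \sup_{t \in [0,T]} e^{-\lambda t}|Y(t)|$. The extended uniform Lipschitz hypothesis on $g$ yields
$$
|\Phi(Y_1)(t) - \Phi(Y_2)(t)| \leq \rho C \int_0^t \sup_{0 \leq u \leq s} |Y_1(u) - Y_2(u)|\, ds,
$$
which, upon choosing $\lambda = 2\rho C$, gives $\|\Phi(Y_1) - \Phi(Y_2)\|_\lambda \leq \tfrac12 \|Y_1 - Y_2\|_\lambda$. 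The extended linear growth condition ensures $\Phi$ maps $C[0,T]$ into itself, so iterates stay bounded and Banach's theorem produces a unique continuous fixed point, which I denote $Y^\rho$ to highlight its dependence on $\rho$.

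For Condition (c) I would formally differentiate the integral equation in $\rho$ to see that the candidate derivative $V := \partial Y^\rho / \partial \rho$ ought to satisfy the linearized Volterra equation
$$
V(t) = \int_0^t g(s, W_0^s, Y_0^s)\, ds + \rho \int_0^t Dg(s, W_0^s, Y_0^s)\cdot V_0^s\, ds,
$$
with $Y = Y^\rho$; unique existence of $V \in C[0,T]$ follows from exactly the same contraction argument, now using the Lipschitz and linear-growth bounds on $Dg$. To prove that $V$ is indeed the derivative, I would set $R_h(t) := h^{-1}(Y^{\rho+h}(t) - Y^\rho(t)) - V(t)$, subtract the integral equations at parameters $\rho$ and $\rho+h$, expand $g$ via its Frechet derivative with the remainder estimated by the Lipschitz continuity of $Dg$, and apply Gronwall's inequality to conclude $\|R_h\|_\infty \to 0$ as $h \to 0$. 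The chain rule, combined with the Frechet differentiability of $g$ in its third argument, then yields the differentiability of $g(s, W_0^s, Y_0^s)$ in $\rho$ at every $s$, with probability $1$.

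The main technical obstacle is the rigorous differentiability step, specifically the uniform control of the Taylor remainder in $s \in [0,T]$. One must first establish the a priori estimate $\|Y^{\rho+h} - Y^\rho\|_\infty \leq K|h|$ for $h$ in a compact neighborhood of $\rho$, again by Gronwall applied to the difference of the two integral equations; this renders the $L^\infty$-distance between $(Y^{\rho+h})_0^s$ and $(Y^\rho)_0^s$ of order $|h|$ uniformly in $s$, which is precisely what the Lipschitz continuity of $Dg$ requires in order to force the remainder to vanish. The extended linear growth hypotheses play the auxiliary but essential role of keeping $Y^\rho$ in a bounded subset of $C[0,T]$ for $\rho$ in each compact set, so that all Lipschitz constants remain uniform and the whole argument carries through globally on $[0,T]$.
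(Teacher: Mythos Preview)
Your proposal is correct and follows the same overall strategy as the paper: treat (\ref{continuous-time-system}) pathwise as a deterministic functional integral equation and invoke the classical Lipschitz/linear-growth machinery. The mechanics differ slightly. The paper runs Picard's iteration, observes that each iterate $Y_{(n)}$ is differentiable in $\rho$ with derivative $Z_{(n)}$ satisfying the obvious recursion, and then argues that both $Y_{(n)}$ and $Z_{(n)}$ converge uniformly over $(\rho,t)\in K\times[0,T]$ for every compact $K\subset\mathbb{R}_+$; differentiability of the limit then follows from the standard ``uniform convergence of derivatives'' theorem. You instead obtain $Y^\rho$ and the variational solution $V$ directly as Banach fixed points and establish differentiability by a Gronwall estimate on the discrepancy $R_h$ between the difference quotient and $V$.

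Both routes are textbook variants of the same argument. Your approach is arguably more self-contained, since it avoids the separate appeal to the uniform-convergence-of-derivatives lemma. The paper's route, by building in uniformity over $\rho\in K$ from the outset, more directly yields \emph{continuous} differentiability in $\rho$ (which is what Condition~(c) of Theorem~\ref{continuous-time-main-result-1} actually requires, though the proposition as stated asks only for differentiability). Your argument can be upgraded to give this too --- the continuity of $\rho\mapsto V$ follows from another Gronwall estimate on $V^{\rho+h}-V^\rho$ --- but you would need to add that step explicitly.
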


\begin{proof}
We only sketch the proof, as it is essentially the standard argument for the existence and uniqueness of the solution to a stochastic differential equation with the well-known uniform Lipschitz and linear growth conditions; see, e.g., the proof of Theorem $2.2$ in Chapter $5$ of~\cite{mao08}.

Consider the following Picard's iteration:
\begin{equation*}
Y_{(0)}(t) \equiv 0, \quad Y_{(n+1)}(t) = \int_0^t g(s, W_0^s, Y_{(n),0}^s) ds + B(t), \quad t \in[0, T].
\end{equation*}
It can be easily verified that, for any $n$ and any $t \in [0, T]$, $Y_{(n)}(t)$ is differentiable with respect to $\rho$. Letting $Z_{(n)}(t)=\frac{d}{d\rho} Y_{(n)}(t)$ for all $n$, we have
\begin{equation*}
Z_{(0)}(t) \equiv 0, \quad Z_{(n+1)}(t) = \int_0^t g(s, W_0^s, Y_{(n),0}^s) ds+\rho \int_0^t D g(s, W_0^s, Y_{(n), 0}^s)(Z_{(n),0}^s) ds, \quad t \in[0, T].
\end{equation*}
Now, applying the standard argument for the existence and uniqueness of the solution to a stochastic differential equation, we deduce that there exists a stochastic process $\{Y(t), t \in [0, T]\}$ such that for any compact set $K \subset \mathbb{R}$,
\begin{equation*}
\lim_{n \to \infty} \sup_{\rho \in K, \, t \in [0, T]} |Y_n(t)-Y(t)|=0, \quad \mbox{a.s.}
\end{equation*}
and furthermore, there exists a stochastic process ${Z(t), t \in [0, T]}$ such that for any compact set $K \subset \mathbb{R}$,
\begin{equation*}
\lim_{n \to \infty} \sup_{\rho \in K, \, t \in [0, T]} |Z_n(t)-Z(t)|=0, \quad \mbox{a.s.}
\end{equation*}
It then follows that $Y(t)$ is differentiable with respect to $\rho$, and $\frac{d}{d\rho} Y(t)=Z(t)$ with probability $1$, and consequently, $g(s, W_0^s, Y_0^s)$ is differentiable with respect to $\rho$.
\end{proof}

\begin{rem}
It is well known~\cite{mao08} that (\ref{c-1}) and (\ref{c-3}) are respectively the usual Lipschitz and linear growth conditions that guarantee the existence and uniqueness of the solution to (\ref{continuous-time-system}). The addition of (\ref{c-2}) and (\ref{c-4}) further ensures the differentiability of the solution with respect to $\rho$.
\end{rem}

\subsection{Girsanov's Theorem}
One of the major tools that will be used in our treatment of continuous-time Gaussian channels is Girsanov's theorem. To be more precise, we will use the following two versions of Girsanov's theorem that can be found in~\cite{LS}, which are stated below with slightly different notation from~\cite{LS}.

Let $(\Omega, \mathcal{F}, P)$ be a complete probability space, let $\{\mathcal{F}_t\}, t \geq 0$, be a nondecreasing family of sub-$\sigma$-algebras, and let $B=\{B(t), \mathcal{F}_t\}$, $t \geq 0$, be a standard Brownian motion. For $T > 0$, we consider the It\^{o} process $\xi=(\xi(t), \mathcal{F}_t)$, $0 \leq t \leq T$, such that
\begin{equation}  \label{Ito-1}
\xi(t)=\int_0^t \beta(s) ds + B(t), \quad \xi(0)=0.
\end{equation}
Denote by $(C_T, \mathcal{B}_T)$ the measurable space of the continuous functions $x(s)$, $0 \leq s \leq T$, with $x(0)=0$, and let $\mu_{\xi}$ and $\mu_{B}$ be the measures on $(C_T, \mathcal{B}_T)$ induced by $\xi$ and $B$, respectively.

\begin{thm}[Theorem $7.1$ of~\cite{LS}]  \label{Theorem7.1}
Let $\xi=\{\xi(t), \mathcal{F}_t\}$, $0 \leq t \leq T$, be an It\^{o} process satisfying (\ref{Ito-1}). Suppose that the process $\beta=\{\beta(t), \mathcal{F}_t\}$, $0 \leq t \leq T$, satisfies
\begin{equation*}
P\left(\int_0^T \beta^2(t) dt < \infty\right)=1,
\end{equation*}
and
\begin{equation} \label{expo-integrability}
\E\left[\exp\left\{-\int_0^T \beta(t) dB(t)-\frac{1}{2} \int_0^T \beta^2(t) dt\right\}\right]=1.
\end{equation}
Then, $\mu_{\xi} \sim \mu_{B}$ (here $\sim$ means ``equivalent'') and with probability $1$
\begin{equation*}
\frac{d\mu_B}{d\mu_{\xi}}(\xi)=\E\left[\left.\exp\left\{-\int_0^T \beta(t) d\xi(t)+\frac{1}{2} \int_0^T \beta^2(t) dt\right\}\right|\xi_0^T\right].
\end{equation*}
\end{thm}

Condition (\ref{expo-integrability}) in Theorem~\ref{Theorem7.1} is somewhat restrictive since it is an exponential integrability condition. As elaborated in the following version of Girsanov's theorem, when $\xi$ is a diffusion process (a special kind of It\^{o} process), this condition can be replaced by some much weaker square integrability conditions.
\begin{thm}[Theorem $7.7$ of~\cite{LS}]  \label{Theorem7.7}
Let $\xi=\{\xi(t), \mathcal{F}_t\}$, $0 \leq t \leq T$, be an It\^{o} process satisfying
\begin{equation}  \label{Ito-2}
\xi(t)=\int_0^t \alpha(s, \xi_0^s) ds + B(t), \quad \xi(0)=0,
\end{equation}
where for each $s$, $\alpha(s, \xi_0^s)$ is measurable with respect to $\sigma(\xi_0^s)$, the $\sigma$-algebra generated by $\{\xi_0^s\}$. Then, $\mu_{\xi} \sim \mu_B$ if and only if
\begin{equation} \label{2-squares}
P\left(\int_0^T \alpha^2(t, \xi_0^t) dt < \infty \right)=1, \quad P\left(\int_0^T \alpha^2(t, B_0^t) dt < \infty\right)=1.
\end{equation}
Moreover, if (\ref{2-squares}) holds, we have
\begin{equation*}
\frac{d\mu_{\xi}}{d\mu_{B}}(B)=\exp\left\{\int_0^T \alpha(t, B_0^t) dB(t)-\frac{1}{2} \int_0^T \alpha^2(t, B_0^t) dt\right\},
\end{equation*}
and
\begin{equation*}
\frac{d\mu_B}{d\mu_{\xi}}(\xi)=\exp\left\{-\int_0^T \alpha(t, \xi_0^t) d\xi(t)+\frac{1}{2} \int_0^T \alpha^2(t, \xi_0^t) dt\right\}.
\end{equation*}

\end{thm}

\subsection{Proof of Theorem~\ref{continuous-time-main-result-1}}

Fix $W=w$ and let $Y_{|w)}$ be such that
\begin{equation*}
Y_{|w)}(t)=\rho \int_0^t g(s, w_0^s, Y_{|w), 0}^s) ds+B(t), \quad t \in [0, T].
\end{equation*}
Then, by Theorem~\ref{Theorem7.1} (it can be checked that its assumptions are implied by Condition (e)), we observe that $\mu_{Y_{|w)}}\sim \mu_B \sim \mu_{Y}$, and furthermore,
\begin{equation*}
\frac{d\mu_{Y_{|w)}|W}}{d\mu_B}(Y_{|w), 0}^T|w_0^T)=\exp\left\{\rho\int_0^T g(s, w_0^s, Y_{|w), 0}^s) dY_{|w)}(s)-\frac{\rho^2}{2}\int_0^T g^2(s, w_0^s, Y_{|w), 0}^s)ds\right\}.
\end{equation*}
It then follows from Lemma $4.10$ in~\cite{LS} that
\begin{equation*}
\frac{d\mu_{Y|W}}{d\mu_B}(Y_0^T|w_0^T) =\exp\left\{\rho\int_0^T g(s, w_0^s, Y_0^s) dY(s)-\frac{\rho^2}{2}\int_0^T g^2(s, w_0^s, Y_0^{s}) ds\right\}.
\end{equation*}
Note that, by definition, we have
\begin{align*}
I(W_0^T; Y_0^T)
&=\E\left[\log \dfrac{d\mu_{W Y}}{d(\mu_W \times \mu_{Y})}(W_0^T, Y_0^T)\right]\\
&=\E\left[\log \frac{d\mu_{Y|W}}{d\mu_B}(Y_0^T|W_0^T)\right]-\E\left[\log \frac{d\mu_{Y}}{d\mu_B}(Y_0^T)\right]\\
&=\frac{\rho^2}{2}\int_0^T \E [g^2(s)]ds-\E\left[\log \frac{d\mu_{Y}}{d\mu_B}(Y_0^T)\right].
\end{align*}
Taking derivative with respect to $\rho$ then yields
\begin{align*}
\frac{d}{d\rho }I(W_0^T; Y_0^T)& =\rho\int_0^T \E[g^2(s)]ds+\frac{\rho^2}{2}\frac{d}{d\rho }\int_0^T \E[g^2(s)] ds-\dfrac{d}{d\rho}\E\left[\log \frac{d\mu_{Y}}{d\mu_B}(Y_0^T)\right]\\
&\stackrel{(a)}{=}\rho\int_0^T \E[g^2(s)]ds+\rho^2\int_0^T \E \left[g(s) \frac{d}{d\rho}g(s) \right]ds-\dfrac{d}{d\rho}\E\left[\log \frac{d\mu_{Y}}{d\mu_B}(Y_0^T)\right],
\end{align*}
where $(a)$ will be justified in Appendix~\ref{continuous-interchange}. Writing $g(s, w_0^s,Y_0^s)$ as $\tilde{g}(s)$, we have
\begin{align*}
\frac{d}{d\rho}\left(\frac{d\mu_{Y}}{d\mu_B}(Y_0^T)\right)
&=\frac{d}{d\rho}\int \frac{d\mu_{Y|W}}{d\mu_B}(Y_0^T|w_0^T) \mu_W(dw)\\
&\stackrel{(b)}{=}\int \frac{d}{d\rho} \frac{d\mu_{Y|W}}{d\mu_B}(Y_0^T|w_0^T) \mu_W(dw)\\
&=\int \frac{d}{d\rho} \exp\left\{\rho \int_0^T \tilde{g}(s) dY(s)-\frac{\rho^2}{2}\int_0^T \tilde{g}^2(s) ds\right\} \mu_W(dw)\\
&=\int \frac{d}{d\rho} \exp\left\{\rho^2 \int_0^T \tilde{g}(s) g(s) ds+\rho \int_0^T \tilde{g}(s) dB(s)-\frac{\rho^2}{2}\int_0^T \tilde{g}^2(s) ds\right\} \mu_W(dw)\\
&=\int
 \bigg( \int_0^T \tilde{g}(s) dY(s)+\rho\int_0^T \frac{d}{d\rho} \tilde{g}(s) dY(s)+\rho \int_0^T \tilde{g}(s) (g(s)-\tilde{g}(s))ds\\
&\quad\quad+\rho^2\int_0^T \tilde{g}(s) \frac{d}{d\rho} \left(g(s)-\tilde{g}(s)\right) ds \bigg)\frac{d\mu_{WY}}{d\mu_B}(dw, Y_0^T)
 \\
&=\frac{d\mu_{Y}}{d\mu_B}(Y_0^T)\int  \bigg(\int_0^T \tilde{g}(s) dY(s)+\rho\int_0^T \frac{d}{d\rho} \tilde{g}(s) dY(s)+\rho \int_0^T \tilde{g}(s) (g(s)-\tilde{g}(s))ds\\
&\quad\quad+\rho^2\int_0^T \tilde{g}(s) \frac{d}{d\rho} \left(g(s)-\tilde{g}(s)\right)  ds \bigg)
 \mu_{W|Y}(dw|Y_0^T)\\
&=\frac{d\mu_{Y}}{d\mu_B}(Y_0^T) \bigg(\E \left[\left.\int_0^T g(s) dY(s) \right|Y_0^T \right]
+\rho \int \int_0^T \frac{d}{d\rho} \tilde{g}(s) dY(s) \mu_{W|Y}(dw|Y_0^T)\\
&\quad +\rho\int_0^T (\E[g(s)|Y_0^T]g(s)-\E[g^2(s)|Y_0^T])ds\\
&\quad +\rho^2 \int_0^T \frac{d}{d\rho} g(s) \E[g(s)|Y_0^T]ds - \rho^2 \int_0^T \tilde{g}(s) \frac{d}{d\rho} \tilde{g}(s) ds \mu_{W|Y}(dw|Y_0^T)\bigg),
\end{align*}
where $(b)$ will be justified in Appendix~\ref{continuous-interchange} and we have used $dY(s)=g(s)ds+dB(s)$ for the fifth equality.
Note that by the properties of conditional expectation and the It\^o integral, we have
\begin{equation*}
\E\left[\E\left[\left.\int_0^T g(s) dY(s)\right|Y_0^T \right]\right]=\E\left[\int_0^T g(s) dY(s)\right]=\rho \int_0^T \E [g^2(s)] ds,
\end{equation*}
and similarly,
\begin{equation*}
\E\left[\int_0^T \E[g^2(s)|Y_0^T]ds \right]= \int_0^T \E[g^2(s)]ds,
\end{equation*}
and
\begin{equation*}
\rho\E\left[ \E\left[\left. \int_0^T \frac{d}{d\rho} g(s) dY(s) \right|Y_0^T \right] \right]=\rho^2 \int_0^T \E\left[g(s) \frac{d}{d\rho} g(s) \right]ds=\rho^2\E \left[\int_0^T \E \left[\left.g(s)\frac{d}{d\rho} g(s) \right|Y_0^T\right]ds \right].
\end{equation*}
It then follows that
\begin{align*}
\dfrac{d}{d\rho}\E\left[\log \frac{d\mu_{Y}}{d\mu_B}(Y_0^T)\right] &\stackrel{(c)}{=} \E\left[ \dfrac{d}{d\rho}\log \frac{d\mu_{Y}}{d\mu_B}(Y_0^T)\right]\\
&=\E\left[\dfrac{d}{d\rho}\left(\frac{d\mu_{Y}}{d\mu_B}(Y_0^T)\right)/\frac{d\mu_{Y}}{d\mu_{B}}(Y_0^T)\right]\\
&= \rho\int_0^T \E[g^2(s)]ds+\rho \E\left[\int \E\left[\left.\int_0^T \frac{d}{d\rho} \tilde{g}(s) dY(s)\right|Y_0^T\right] \mu_{W|Y}(dw|Y_0^T)\right]\\
&+\rho \E\left[\int_0^T \E[g(s)|Y_0^T] g(s)-\E[g^2(s)|Y_0^T] ds\right]+\rho^2 \int_0^T \frac{d}{d\rho} g(s) \E[g(s)|Y_0^T] ds\\
&-\rho^2 \E\left[ \int \int_0^T \tilde{g}(s) \E\left[\left. \frac{d}{d\rho} \tilde{g}(s) \right|Y_0^T \right]ds \mu_{W|Y}(dw|Y_0^T)\right]\\
&= \rho \E\left[\int_0^T \E[g(s)|Y_0^T] g(s) ds\right]+\rho \E\left[\E\left[\left.\E_W\left[\left.\int_0^T \frac{d}{d\rho} g(s) dY(s)\right|Y_0^T \right]\right|Y_0^T\right]\right]\\
&+\rho^2 \int_0^T \E\left[\frac{d}{d\rho} g(s) \E[g(s)|Y_0^T]\right] ds-\rho^2 \E\left[\E\left[\left.\int_0^T g(s) \E_W\left[\left.\frac{d}{d\rho} g(s) \right|Y_0^T\right]ds\right|Y_0^T\right]\right]\\
&= \rho \int_0^T \E[\E^2[g(s)|Y_0^T]] ds+\rho \E\left[\left.\E_W\left[\int_0^T \frac{d}{d\rho} g(s) dY(s)\right|Y_0^T \right]\right]\\
&+\rho^2 \int_0^T \E\left[\frac{d}{d\rho} g(s) \E[g(s)|Y_0^T]\right] ds-\rho^2 \E\left[\int_0^T g(s) \E_W\left[\left.\frac{d}{d\rho} g(s)\right|Y_0^T \right]ds\right]
\end{align*}
where $(c)$ will be justified in Appendix~\ref{continuous-interchange}. Straightforward computations then yield that
\begin{align*}
\frac{d}{d\rho }I(W_0^T; Y_0^T) &=\rho\int_0^T \E[(g(s)-\E[g(s)|Y_0^T])^2] ds\\
&+\rho^2 \int_0^T \E\left[ g(s) \E_W\left[\left.\frac{d}{d\rho} g(s)\right|Y_0^T \right]-\frac{d}{d\rho} g(s) \E[g(s)|Y_0^T]\right] ds\\
&+\rho^2\int_0^T \E\left[g(s) \frac{d}{d\rho} g(s) \right]ds-\rho \E\left[\left.\E_W\left[\int_0^T \frac{d}{d\rho} g(s) dY(s)\right|Y_0^T \right]\right],
\end{align*}
as desired.

\subsection{Proof of Theorem~\ref{continuous-time-main-result-2}}

The proof consists of the following $6$ steps:

\noindent {\bf Step $\mathbf{1}$.} First of all, for any fixed $W=w$, by Theorem~\ref{Theorem7.7}, $\mu_{Y|W=w} \sim \mu_B$ with
\begin{equation*}
\frac{d\mu_{Y|W=w}}{d\mu_B}(B_0^T)=\exp \left(\int_0^T g(s, w_0^s, B_0^s) dB(s)-\frac{1}{2} \int_0^T g^2(s, w_0^s, B_0^s) ds \right),
\end{equation*}
where we have used Conditions (d) and (g) before invoking Theorem~\ref{Theorem7.7}. Moreover, by Condition (d), it follows from Theorem $7.2$ of~\cite{LS} that $\mu_Y \ll \mu_B$ with
\begin{align*}
\frac{d\mu_Y}{d\mu_B}(B_0^T) & = \int \frac{d\mu_{Y|W=w}}{d\mu_B}(B_0^T) d\mu_W(w)\\
&= \int \exp \left(\int_0^T g(s, w_0^s, B_0^s) dB(s)-\frac{1}{2} \int_0^T g^2(s, w_0^s, B_0^s) ds \right) d\mu_W(w),
\end{align*}
which is obviously positive with probability $1$. It then follows from Lemma $6.8$ of~\cite{LS} that $\mu_B \ll \mu_Y$. So, in this step, we have shown that under the conditions specified in theorem, we have $\mu_Y \sim \mu_{Y|W=w} \sim \mu_B$.

\noindent {\bf Step $\mathbf{2}$.} For any $n$ and $\gamma(\cdot), \phi(\cdot) \in C[0, T]$, we follow~\cite{LS} and define a truncated version of $g$ as follows:
\begin{equation*}
g_{(n)}(t, \gamma_0^t, \phi_0^t)=g(t, \gamma_0^t, \phi_0^t) \mathbf{1}_{\int_0^t g^2(s, \gamma_0^t, \phi_0^s) ds < n}.
\end{equation*}
Now, define a truncated version of $Y$ as follows:
\begin{equation*}
Y_{(n)}(t)=\rho \int_0^t g_{(n)}(s, W_0^s, Y_0^s) ds +B(t), \quad t \in [0, T],
\end{equation*}
which, as elaborated on Page $265$ in~\cite{LS}, can be rewritten as
\begin{equation*}
Y_{(n)}(t)=\rho \int_0^t g_{(n)}(s, W_0^s, Y_{(n), 0}^s) ds +B(t), \quad t \in [0, T].
\end{equation*}
It is well known~\cite{du70, ka71} that
\begin{equation*}
I(W_{0}^T; Y_{(n), 0}^T)=\frac{\rho^2}{2} \int_0^T \E[g^2_{(n)}(s, W_0^s, Y_{(n), 0}^s)]-\E[\E^2[g_{(n)}(s, W_0^s, Y_{(n), 0}^s)|Y_{(n), 0}^s]] ds,
\end{equation*}
and
\begin{equation*}
I(W_{0}^T; Y_{0}^T) =\frac{\rho^2}{2} \int_0^T \E[g^2(s, W_0^s, Y_0^s)]-\E[\E^2[g(s, W_0^s, Y_0^s)|Y_{0}^s]] ds.
\end{equation*}
Moreover, it follows from Theorem~\ref{continuous-time-main-result-1} (here, note that extra yet minor care has to be taken since $g_{(n)}(s, W_0^s, Y_{(n), 0}^s)$ is only a piecewise differentiable function in $\rho$; cf. Condition (c)) that
\begin{align}
\nonumber \frac{d}{d\rho} I(W_{0}^T; Y_{(n), 0}^T) & =\rho \int_0^T \E[g^2_{(n)}(s, W_0^s, Y_{(n), 0}^s)]-\E[\E^2[g_{(n)}(s, W_0^s, Y_{(n), 0}^s)|Y_{(n), 0}^T]] ds \\
& \hspace{-5cm} +\rho^2 \int_0^T \E\left[g_{(n)}(s, W_0^s, Y_{(n), 0}^s) \E_W\left[\left.\frac{d}{d\rho} g_{(n)}(s, W_0^s, Y_{(n), 0}^s)\right|Y_{(n),0}^T\right] -\E[g_{(n)}(s, W_0^s, Y_{(n), 0}^s)|Y_{(n), 0}^T] \frac{d}{d\rho} g_{(n)}(s, W_0^s, Y_{(n), 0}^s)\right]ds \notag\\
& \hspace{-4cm}+\rho^2\int_0^T \E\left[g_{(n)}(s, W_0^s, Y_{(n),0}^s) \frac{d}{d\rho} g_{(n)}(s, W_0^s, Y_{(n),0}^s) \right]ds-\rho \E\left[\left.\E_W\left[\int_0^T \frac{d}{d\rho} g_{(n)}(s, W_0^s, Y_{(n),0}^s) dY_{(n)}(s)\right|Y_{(n), 0}^T \right]\right]. \label{truncated-derivative}
\end{align}

\noindent {\bf Step $\mathbf{3}$.} In this step, we will prove that
\begin{align}  \label{f-prime-convergent}
\nonumber \lim_{n \to \infty} \frac{d}{d\rho} I(W_{0}^T; Y_{(n), 0}^T) & = \rho \int_0^T \E[g^2(s, W_0^s, Y_0^s)]-\E[\E^2[g(s, W_0^s, Y_0^s)|Y_{0}^T]] ds\\
& \hspace{-4cm} +\rho^2 \int_0^T \E\left[g(s, W_0^s, Y_{0}^s) \E_W\left[\left.\frac{d}{d\rho} g(s, W_0^s, Y_{0}^s)\right|Y_{0}^T\right] -\E[g(s, W_0^s, Y_{0}^s)|Y_{0}^T] \frac{d}{d\rho} g(s, W_0^s, Y_{0}^s)\right]ds \notag\\
& \hspace{-4cm}+\rho^2\int_0^T \E\left[g(s, W_0^s, Y_{0}^s) \frac{d}{d\rho} g(s, W_0^s, Y_{0}^s) \right]ds-\rho \E\left[\left.\E_W\left[\int_0^T \frac{d}{d\rho} g(s, W_0^s, Y_{0}^s) dY(s)\right|Y_{0}^T \right]\right].
\end{align}

{\bf Step $\mathbf{3.1}$.} In this step, we observe that, with Condition (d), an application of the dominated convergence theorem will yield
\begin{equation*}
\lim_{n \to \infty} \int_0^T \E[g^2_{(n)}(s, W_0^s, Y_{(n), 0}^s)] ds = \int_0^T \E[g^2(s, W_0^s, Y_{0}^s)] ds.
\end{equation*}

{\bf Step $\mathbf{3.2}$.} In this step, we will prove that
\begin{equation}  \label{five-steps-2}
\lim_{n \to \infty} \int_0^T \E[\E^2[g_{(n)}(s, W_0^s, Y_{(n), 0}^s)|Y_{(n), 0}^T]] ds = \int_0^T \E[\E^2[g(s, W_0^s, Y_{0}^s)|Y_{0}^T]] ds.
\end{equation}
First of all, we note that
\begin{align*}
\E[\E^2[g_{(n)}(s, W_0^s, Y_{(n), 0}^s)|Y_{(n), 0}^T]]
&=\E\left[\left(\int g_{(n)}(s, w_0^s, Y_{(n), 0}^s) \mu_{W|Y_{(n)}}(dw|Y_{(n),0}^T)\right)^2\right]\\
&\hspace{-1cm} =\E\left[\left(\int g_{(n)}(s, w_0^s, Y_{(n), 0}^s) \frac{d\mu_{Y_{(n)}|W}}{d\mu_B}(Y_{(n),0}^T|w_0^T) \mu_W(dw) / \frac{d\mu_{Y_{(n)}}}{d\mu_B}(Y_{(n), 0}^T) \right)^2 \right]\\
&\hspace{-1cm} =\E \left[ \left(\int g_{(n)}(s, w_0^s, B_0^T) \frac{d\mu_{Y_{(n)}|W}}{d\mu_B}(B_0^T|w_0^T) \mu_W(dw) \right)^2 \times \left( \frac{d\mu_{Y_{(n)}}}{d\mu_B}(B_0^T) \right)^{-1} \right].
\end{align*}
We now proceed with the following steps:

{\bf Step $\mathbf{3.2.1}$.} In this step, we prove that in probability
\begin{equation*}
\frac{d\mu_{Y_{(n)}}}{d\mu_B}(B_0^T) \to \frac{d\mu_{Y}}{d\mu_B}(B_0^T).
\end{equation*}
First of all,
\begin{equation*}
\frac{d\mu_{Y_{(n)}}}{d\mu_B}(B_0^T) = \int \exp\left(\rho \int_0^T g_{(n)}(s, w_0^s, B_0^s) dB(s)-\frac{\rho^2}{2}\int_0^t g^2_{(n)}(s, w_0^s, B_0^s)ds\right) \mu_W(dw).
\end{equation*}
With Condition (d), we apply the It\^{o} isometry~\cite{ok95} to deduce that
\begin{equation*}
\E\left[\left(\int_0^T (g_{(n)}(s, w_0^s, B_0^s)-g(s, w_0^s, B_0^s)) dB(s)\right)^2\right]=\E\left[\int_0^T (g_{(n)}(s, w_0^s, B_0^s)-g(s, w_0^s, B_0^s))^2 ds\right] \to 0,
\end{equation*}
which further implies that
\begin{equation*}
\exp\left(\rho \int_0^T g_{(n)}(s, w_0^s, B_0^s) dB(s)-\frac{\rho^2}{2}\int_0^t g^2_{(n)}(s, w_0^s, B_0^s) ds\right)
\end{equation*}
converges to
\begin{equation*}
\exp\left(\rho \int_0^T g(s, w_0^s, B_0^s) dB(s)-\frac{\rho^2}{2}\int_0^t g^2(s, w_0^s, B_0^s) ds\right)
\end{equation*}
in probability. And moreover, it can be easily checked that
\begin{equation*}
\E\left[\int \exp\left(\rho \int_0^T g_{(n)}(s, w_0^s, B_0^s) dB(s)-\frac{\rho^2}{2}\int_0^t g^2_{(n)}(s, w_0^s, B_0^s) ds\right) \mu_W(dw)\right] = \E\left[\frac{d\mu_{Y_{(n)}}}{d\mu_B}(B_0^T)\right]=1
\end{equation*}
and
\begin{equation*}
\E\left[\int \exp\left(\rho \int_0^T g(s, w_0^s, B_0^s) dB(s)-\frac{\rho^2}{2}\int_0^t g^2(s, w_0^s, B_0^s) ds\right) \mu_W(dw)\right] = \E\left[\frac{d\mu_{Y}}{d\mu_B}(B_0^T)\right] =1.
\end{equation*}
It then follows from Theorem $5.5.2$ of~\cite{Durrett} that
\begin{equation*}
\lim_{n \to \infty} \E\left[\int \left|\left(\exp\left(\rho \int_0^T g_{(n)}(s, w_0^s, B_0^s) dB(s)-\frac{\rho^2}{2}\int_0^t g^2_{(n)}(s, w_0^s, B_0^s) ds\right) \right. \right. \right.
\end{equation*}
\begin{equation*}
\left. \left. \left. -\exp\left(\rho \int_0^T g(s, w_0^s, B_0^s) dB(s)-\frac{\rho^2}{2}\int_0^t g^2(s, w_0^s, B_0^s) ds\right) \right) \right| \mu_W(dw)\right]= 0,
\end{equation*}
which further implies that
\begin{equation*}
\int \exp\left(\rho \int_0^T g_{(n)}(s, w_0^s, B_0^s) dB_s-\frac{\rho^2}{2}\int_0^t g^2_{(n)}(s, w_0^s, B_0^s) ds\right) \mu_W(dw)
\end{equation*}
converges to
\begin{equation*}
\int \exp\left(\rho \int_0^T g(s, w_0^s, B_0^s) dB_s-\frac{\rho^2}{2}\int_0^t g^2(s, w_0^s, B_0^s) ds\right) \mu_W(dw)
\end{equation*}
in probability.

{\bf \phantom{1234} Step $\mathbf{3.2.2}$.} In this step, we will prove that in probability
\begin{equation*}
\int g_{(n)}(s, w_0^s, B_0^s) \frac{d\mu_{Y_{(n)}|W}}{d\mu_B}(B_0^T|w_0^T) \mu_W(dw) \to \int g(s, w_0^s, B_0^s) \frac{d\mu_{Y|W}}{d\mu_B}(B_0^T|w_0^T) \mu_W(dw).
\end{equation*}
First of all, it is easy to check that in probability
\begin{equation*}
g_{(n)}(s, w_0^s, B_0^s) \frac{d\mu_{Y_{(n)}|W}}{d\mu_B}(B_0^T|w_0^T) \to g(s, w_0^s, B_0^s) \frac{d\mu_{Y|W}}{d\mu_B}(B_0^T|w_0^T).
\end{equation*}
And moreover, we have
\begin{equation*}
\E\left[\int |g_{(n)}(s, w_0^s, B_0^s)| \frac{d\mu_{Y_{(n)}|W}}{d\mu_B}(B_0^T|w_0^T) \mu_W(dw)\right]=\E[|g_{(n)}(s, W(s), Y_{(n),0}^s)|]
\end{equation*}
converges to
\begin{equation*}
\E[|g(s, W_0^s, Y_{0}^s)|]=\E\left[\int |g(s, w_0^s, B_0^s)| \frac{d\mu_{Y|W}}{d\mu_B}(B_0^T|w_0^T) \mu_W(dw)\right].
\end{equation*}
So, similarly as in Step $3.1.1$, we deduce that
\begin{equation*}
\int g_{(n)}(s, w_0^s, B_0^s) \frac{d\mu_{Y_{(n)}|W}}{d\mu_B}(B_0^T|w) \mu_W(dw)
\to \int g(s, w_0^s, B_0^s) \frac{d\mu_{Y|W}}{d\mu_B}(B_0^T|w) \mu_W(dw).
\end{equation*}
in probability.

{\bf Step $\mathbf{3.2.3}$}. Note that Steps $3.2.1$ and $3.2.2$ collectively yield that
\begin{equation*}
 \left(\int g_{(n)}(s, w_0^s, B_0^T) \frac{d\mu_{Y_{(n)}|W}}{d\mu_B}(B_0^T|w_0^T) \mu_W(dw) \right)^2 \times \left( \frac{d\mu_{Y_{(n)}}}{d\mu_B}(B_0^T) \right)^{-1}
\end{equation*}
converges to
\begin{equation*}
\left(\int g(s, w_0^s, B_0^T) \frac{d\mu_{Y|W}}{d\mu_B}(B_0^T|w_0^T) \mu_W(dw) \right)^2 \times \left( \frac{d\mu_{Y}}{d\mu_B}(B_0^T) \right)^{-1}
\end{equation*}
in probability. Now, applying Jensen's inequality, we have
\begin{align*}
&\hspace{-1cm} \left(\int g_{(n)}(s, w_0^s, B_0^T) \frac{d\mu_{Y_{(n)}|W}}{d\mu_B}(B_0^T|w_0^T) \mu_W(dw) \right)^2 \times \left( \frac{d\mu_{Y_{(n)}}}{d\mu_B}(B_0^T) \right)^{-1}\\
&=\left(\int g_{(n)}(s, w_0^s, B_0^T) \frac{d\mu_{Y_{(n)}|W}}{d\mu_B}(B_0^T|w_0^T) \mu_W(dw)/\frac{d\mu_{Y_{(n)}}}{d\mu_B}(B_0^T)\right)^2 \times \frac{d\mu_{Y_{(n)}}}{d\mu_B}(B_0^T)\\
& \leq \left(\int g_{(n)}^2(s, w_0^s, B_0^T) \frac{d\mu_{Y_{(n)}|W}}{d\mu_B}(B_0^T|w_0^T) \mu_W(dw)/\frac{d\mu_{Y_{(n)}}}{d\mu_B}(B_0^T)\right) \times \frac{d\mu_{Y_{(n)}}}{d\mu_B}(B_0^T)\\
& = \int g_{(n)}^2(s, w_0^s, B_0^T) \frac{d\mu_{Y_{(n)}|W}}{d\mu_B}(B_0^T|w_0^T) \mu_W(dw).
\end{align*}
Note that
\begin{equation*}
\E\left[\int g_{(n)}^2(s, w_0^s, B_0^T) \frac{d\mu_{Y_{(n)}|W}}{d\mu_B}(B_0^T|w_0^T) \mu_W(dw)\right]=\E[g_{(n)}^2(s, W_0^s, Y_{(n), 0}^s)] \to \E[g^2(s, W_0^s, Y_0^s)] < \infty,
\end{equation*}
where the finiteness is due to Condition (d). Finally, the desired (\ref{five-steps-2}) follows from the generalized dominated convergence theorem (see, e.g., Theorem $19$ on Page $89$ of~\cite{ro10}).

{\bf Step $\mathbf{3.3}$.} In this step, we establish the following two convergences:
\begin{equation}  \label{extra-term-1}
\hspace{-2.3cm} \lim_{n \to \infty} \int_0^T \E\left[g_{(n)}(s, W_0^s, Y_{(n), 0}^s)\E_W\left[\left.\frac{d}{d\rho} g_{(n)}(s, W_0^s, Y_{(n), 0}^s)\right|Y_{(n),0}^T\right]\right]ds= \int_0^T \E\left[g(s, W_0^s, Y_{0}^s) \E_W\left[\left.\frac{d}{d\rho} g(s, W_0^s, Y_{0}^s)\right|Y_0^T\right]\right]ds
\end{equation}
and
\begin{equation}  \label{extra-term-2}
\hspace{-1cm} \lim_{n \to \infty} \int_0^T \E\left[\E[g_{(n)}(s, W_0^s, Y_{(n), 0}^s)|Y_{(n), 0}^T] \frac{d}{d\rho} g_{(n)}(s, W_0^s, Y_{(n), 0}^s)\right]ds = \int_0^T \E\left[\E[g(s, W_0^s, Y_{0}^s)|Y_0^T] \frac{d}{d\rho} g(s, W_0^s, Y_{0}^s)\right]ds,
\end{equation}
and
\begin{equation} \label{extra-term-3}
\lim_{n \to \infty} \int_0^T \E\left[g_{(n)}(s, W_0^s, Y_{(n),0}^s) \frac{d}{d\rho} g_{(n)}(s, W_0^s, Y_{(n),0}^s) \right]ds = \int_0^T \E\left[g(s, W_0^s, Y_{0}^s) \frac{d}{d\rho} g(s, W_0^s, Y_{0}^s) \right]ds,
\end{equation}
and
\begin{equation} \label{extra-term-4}
\lim_{n \to \infty} \E\left[\left.\E_W\left[\int_0^T \frac{d}{d\rho} g_{(n)}(s, W_0^s, Y_{(n),0}^s) dY_{(n)}(s)\right|Y_{(n), 0}^T \right]\right] = \E\left[\left.\E_W\left[\int_0^T \frac{d}{d\rho} g(s, W_0^s, Y_{0}^s) dY(s)\right|Y_{0}^T \right]\right].
\end{equation}

{\bf Step $\mathbf{3.3.1}$.} In this step, we will prove (\ref{extra-term-3}). Writing $g_{(n)}(s, W_0^s, Y_{(n), 0}^s), g(s, W_0^s, Y_0^s)$ as $g_{(n)}(s), g(s)$ for notational simplicity, we have
\begin{equation*}
\int_0^T \E\left[g_{(n)}(s)\frac{d}{d\rho} g_{(n)}(s)\right]ds - \int_0^T \E\left[g(s) \frac{d}{d\rho} g(s)\right]ds
\end{equation*}
\begin{equation*}
\hspace{-1cm} =\int_0^T \E\left[g_{(n)}(s)\frac{d}{d\rho} g_{(n)}(s)\right]ds - \int_0^T \E\left[g(s) \frac{d}{d\rho} g_{(n)}(s)\right]ds+ \int_0^T \E\left[g(s) \frac{d}{d\rho} g_{(n)}(s)\right]ds- \int_0^T \E\left[g(s) \frac{d}{d\rho} g(s)\right]ds
\end{equation*}
\begin{equation*}
=\int_0^T \E\left[(g_{(n)}(s)-g(s))\frac{d}{d\rho} g_{(n)}(s)\right]ds - \int_0^T \E\left[g(s) \left(\frac{d}{d\rho} g_{(n)}(s)-\frac{d}{d\rho} g(s)\right)\right]ds.
\end{equation*}
The desired convergences then follow from the fact that as $n$ tends to infinity,
\begin{equation*}
\left(\int_0^T \E \left[(g_{(n)}(s)-g(s))\frac{d}{d\rho} g_{(n)}(s)\right] ds \right)^2 \leq \int_0^T \E[(g_{(n)}(s)-g(s))^2] ds \int_0^T \E\left[\left(\frac{d}{d\rho} g_{(n)}(s)\right)^2\right] ds \to 0
\end{equation*}
and
\begin{equation*}
\left(\int_0^T \E\left[g(s) \left(\frac{d}{d\rho} g_{(n)}(s)-\frac{d}{d\rho} g(s)\right)\right] ds \right)^2 \leq \int_0^T \E[g^2(s)] ds \int_0^T \E\left[\left(\frac{d}{d\rho} g_{(n)}(s)-\frac{d}{d\rho} g(s)\right)^2 \right] ds \to 0,
\end{equation*}
where we have used the fact that for any $n$,
\begin{equation} \label{no-point-mass}
\frac{d}{d\rho} g_{(n)}(s, W_0^s, Y_{(n), 0}^s)=\left(\frac{d}{d\rho} g(s, W_0^s, Y_{0}^s)\right) \mathbf{1}_{\int_0^s g^2(t, W_0^t, Y_{0}^t) dt < n} \quad \mbox{a.s.},
\end{equation}
which is implied by Condition (f).

{\bf Step $\mathbf{3.3.2}$.} In this step, we will prove (\ref{extra-term-1}), (\ref{extra-term-2}) and (\ref{extra-term-4}), which all follow from similar arguments as in Step $3.2$.

{\bf Step $\mathbf{3.4}$.} Note that Steps $3.1$, $3.2$ and $3.3$ collectively yield (\ref{f-prime-convergent}).

\noindent {\bf Step $\mathbf{4}$.} In this step, we will prove
\begin{equation}  \label{f-convergent}
\lim_{n \to \infty} I(W_{0}^T; Y_{(n), 0}^T) = I(W_{0}^T; Y_{0}^T).
\end{equation}
Obviously, it suffices to prove that
\begin{equation}  \label{f1-convergent}
\lim_{n \to \infty} \int_0^T \E[g^2_{(n)}(s, W_0^s, Y_{(n), 0}^s)] ds = \int_0^T \E[g^2(s, W_0^s, Y_0^s)] ds,
\end{equation}
and
\begin{equation}  \label{f2-convergent}
\lim_{n \to \infty} \int_0^T \E[\E^2[g_{(n)}(s, W_0^s, Y_{(n),0}^s)|Y_{(n), 0}^s]] ds = \int_0^T \E[\E^2[g(s, W_0^s, Y_0^s)|Y_{0}^s]] ds.
\end{equation}
Note that (\ref{f1-convergent}) has been established in Step $3.1$, and the proof of (\ref{f2-convergent}) can be established using a parallel argument as in Step $3.2$.

\noindent {\bf Step $\mathbf{5}$.} In this step, we will establish the continuity of the following terms with respect to $\rho$:
$$
\int_0^T \E[g^2(s, W_0^s, Y_0^s)] ds, \quad \int_0^T \E[\E^2[g(s, W_0^s, Y_0^s)|Y_{0}^T]] ds,
$$
$$
\int_0^T \E\left[g(s, W_0^s, Y_{0}^s) \frac{d}{d\rho} g(s, W_0^s, Y_{0}^s) \right]ds, \quad \int_0^T \E\left[\E[g(s, W_0^s, Y_0^s)|Y_0^T] \frac{d}{d\rho} g(s, W_0^s, Y_0^s)\right]ds,
$$
\begin{equation}  \label{another-six-quantities}
\int_0^T \E\left[g(s, W_0^s, Y_0^s) \E_W\left[\left.\frac{d}{d\rho} g(s, W_0^s, Y_0^s)\right|Y_0^T\right]\right]ds, \quad \E\left[\left.\E_W\left[\int_0^T \frac{d}{d\rho} g(s, W_0^s, Y_{0}^s) dY(s)\right|Y_{0}^T \right]\right].
\end{equation}

Note that the continuity of $\int_0^T \E[g^2(s, W_0^s, Y_0^s)] ds$ immediately follows from the dominated convergence theorem together with Condition (d) and the fact that $g(s, W_0^s, Y_0^s)$ is continuous in $\rho$. And moreover, a parallel argument can be used to establish the continuity of
$$
\int_0^T \E\left[g(s, W_0^s, Y_0^s) \frac{d}{d\rho} g(s, W_0^s, Y_0^s)\right]ds.
$$

To establish the continuity of $\int_0^T \E[\E^2[g(s, W_0^s, Y_0^s)|Y_{0}^T]] ds$, it suffices to prove that for any sequence $\{\rho_n\}$ convergent to $\rho$,
\begin{equation*}
\lim_{n \to \infty} \E[\E^2[g(s, W_0^s, Y_0^{(\rho_n), s})|Y_{0}^{(\rho_n), T}]] =
\int_0^T \E[\E^2[g(s, W_0^s, Y_0^{(\rho), s})|Y_{0}^{(\rho), T}]] ds,
\end{equation*}
which can be shown in a parallel argument as in Step $3.2$, where the following similar convergence is proven:
$$
\lim_{n \to \infty} \int_0^T \E[\E^2[g_{(n)}(s, W_0^s, Y_{(n), 0}^s)|Y_{(n), 0}^T]] ds=\int_0^T \E[\E^2[g(s, W_0^s, Y_0^s)|Y_{0}^T]] ds.
$$
Furthermore, similarly as in Step $3.3.2$, the continuity of other quantities in (\ref{another-six-quantities})
can be established as well.

\noindent {\bf Step $\mathbf{6}$.} It then follows from (\ref{truncated-derivative}) that, for any $\tau > 0$,
{\small \begin{align}
I(W_{0}^T; Y_{(n), 0}^{(\tau), T}) &= \int_0^{\tau} \frac{d}{d\rho} I(W_{0}^T; Y_{(n), 0}^{(\rho), T}) d\rho \notag\\
& = \rho \int_0^{\tau}\int_0^T \E[g_{(n)}^2(s, W_0^s, Y_{(n), 0}^s)]-\E[\E^2[g_{(n)}(s, W_0^s, Y_{(n), 0}^s)|Y_{0}^T]] ds d\rho \notag\\
& \hspace{-4cm} +\rho^2 \int_0^{\tau} \int_0^T \E\left[g_{(n)}(s, W_0^s, Y_{(n), 0}^s) \E_W\left[\left.\frac{d}{d\rho} g_{(n)}(s, W_0^s, Y_{(n), 0}^s)\right|Y_{(n), 0}^T\right] -\E[g_{(n)}(s, W_0^s, Y_{(n), 0}^s)|Y_{0}^T] \frac{d}{d\rho} g_{(n)}(s, W_0^s, Y_{(n), 0}^s)\right]ds d\rho \notag\\
& \hspace{-4cm}+\rho^2 \int_0^{\tau} \int_0^T \E\left[g_{(n)}(s, W_0^s, Y_{(n), 0}^s) \frac{d}{d\rho} g_{(n)}(s, W_0^s, Y_{(n), 0}^s) \right]ds d\rho -\rho \int_0^{\tau} \E\left[\left.\E_W\left[\int_0^T \frac{d}{d\rho} g_{(n)}(s, W_0^s, Y_{(n), 0}^s) dY_{(n)}(s)\right|Y_{(n), 0}^T \right]\right] d\rho, \nonumber
\end{align}}
where we have used the superscripts $(\rho)$ and $(\tau)$ to specify the underlying parameters. Applying the dominated convergence theorem, we have
{\small \begin{align} \label{integral-version}
I(W_{0}^T; Y_{0}^{(\tau), T}) & = \rho \int_0^{\tau}\int_0^T \E[g^2(s, W_0^s, Y_{0}^s)]-\E[\E^2[g(s, W_0^s, Y_{0}^s)|Y_{0}^T]] ds d\rho \notag\\
& \hspace{-4cm} +\rho^2 \int_0^{\tau} \int_0^T \E\left[g(s, W_0^s, Y_{0}^s) \E_W\left[\left.\frac{d}{d\rho} g(s, W_0^s, Y_{0}^s)\right|Y_{0}^T\right] -\E[g(s, W_0^s, Y_{0}^s)|Y_{0}^T] \frac{d}{d\rho} g(s, W_0^s, Y_{0}^s)\right]ds d\rho \notag\\
& \hspace{-4cm}+\rho^2 \int_0^{\tau} \int_0^T \E\left[g(s, W_0^s, Y_{0}^s) \frac{d}{d\rho} g(s, W_0^s, Y_{0}^s) \right]ds d\rho -\rho \int_0^{\tau} \E\left[\left.\E_W\left[\int_0^T \frac{d}{d\rho} g(s, W_0^s, Y_{0}^s) dY(s)\right|Y_{0}^T \right]\right] d\rho.
\end{align}}
Note that Step $5$ has established the continuity of the integrand (with respect to $d \rho$) at the RHS of (\ref{integral-version}). So, the desired formula (\ref{continuous-time-feedback-formula-2}) then follows from taking the derivative of (\ref{integral-version}) with respect to $\tau$, and the proof of the theorem is then complete.

\begin{rem}  \label{rigorously-recover}
Theorem~\ref{continuous-time-I-MMSE} is ``essentially'' included by Theorem~\ref{continuous-time-main-result-2} as a special case. More precisely, under the assumptions of Theorem~\ref{continuous-time-I-MMSE}, the average power constraint (\ref{square-integrable}) trivially implies Conditions (b), (c) and (d). Note that in the proof of Theorem~\ref{continuous-time-main-result-2}, the sole use of Condition (f) is deducing (\ref{no-point-mass}), a weaker yet somewhat cumbersome condition, which is also implied by (\ref{square-integrable}). So, with Condition (f) replaced by (\ref{no-point-mass}), Theorem~\ref{continuous-time-main-result-2} recovers Theorem~\ref{continuous-time-I-MMSE} with a direct and rigorous proof~\footnote{For sticklers demanding mathematical rigor and perfection: It is known that there are multiple ``missing steps'' in the proof of Theorem~\ref{continuous-time-I-MMSE} in~\cite{gu05}. For instance, the differentiability of $I(X_0^T; Y_0^T)$ with respect to $snr$ does not seem to be trivial and thereby demands careful justifications, which are however absent in~\cite{gu05}; also, from (259) to (270) in the proof of Lemma $5$ (a key lemma for the proof of Theorem~\ref{continuous-time-I-MMSE}), the authors assumed that for a sequence of random variable $X_n$ convergent to $0$ almost surely, $\lim_{n \to \infty} \E[X_n]=0$, which is not true in general.}.
\end{rem}

\begin{rem}
To show (\ref{five-steps-2}), as opposed to our approach in Step $3.2$, a possible and seemingly more natural first step is to establish the convergence of $\E^2[g_{(n)}(s, W_0^s, Y_{(n), 0}^s)|Y_{(n), 0}^T]$ (either in probability or distribution) as $n$ tends to infinity, which, however, has eluded our multiple attempts. Note that for the above-mentioned convergence, the martingale convergence theorem may not be applied, since it is not clear if the $\sigma$-algebra generated by $Y_{(n), 0}^T$ gets larger at $n$ increases. Similar hurdles were encountered in our attempts to prove (\ref{extra-term-2}) and (\ref{f2-convergent}), and parallel arguments as in Step $3.2$ have to be used instead. Here, we remark that, in general, the problem of establishing the convergence of a sequence of conditional expectations can be rather subtle and challenging; see some positive results in~\cite{go94} and~\cite{cl05} where some fairly strong assumptions are imposed.
\end{rem}

\section{Possible Future Directions}  \label{outlook}

The significant impact of the original I-MMSE relationship (\ref{I-MMSE}) on non-feedback/memoryless channels presages many possible applications of the extended I-MMSE relationships (\ref{extended-formula-feedback}), (\ref{extended-formula-memory}), (\ref{continuous-extended-formula-feedback}), (\ref{continuous-extended-formula-memory}) to situations where the feedback/memory are present; moreover, we envision that our new approach can provide new perspectives to examine a number of aspects in information theory. In this section, we will discuss some promising future directions one can further pursue based on this work. In a nutshell, the possible further directions can be summarized as follows:
\begin{enumerate}
\item further extend the I-MMSE relationship to colored Gaussian feedback channels, general feedback channels, its limiting version in terms of mutual information rate, extensions with relaxed assumptions;
\item explore the properties of the extended MMSE;
\item explore the applications of the extended I-MMSE relationship to Gaussian feedback channels, multi-user Gaussian channels, Gaussian channels with input/output memory;
\item explore the applications of our new approach to other information-theoretic quantities, higher order derivatives, entropy power inequalities, sampling theorems, and so on.
\end{enumerate}

\subsection{Further Extensions of the I-MMSE relationship} \label{further-extensions}

\indent {\bf Colored Gaussian feedback channels.} The discrete-time I-MMSE relationship (\ref{I-MMSE}) carries over verbatim to linear vector Gaussian channels~\cite{gu05}, and its extensions to more general settings include derivatives with respect to arbitrary parameterizations~\cite{pa06}, higher order derivatives~\cite{pa09}, and so on. Extensions of the continuous-time I-MMSE relationship (\ref{continuous-I-MMSE}) have been studied as well; representative work include fractional Brownian motion noise~\cite{du08} and an abstract Wiener space~\cite{za05, wo07}. On the other hand, all the above-mentioned extensions have been confined to the scenarios where the feedback are absent.

In view of our results on extensions of the I-MMSE relationship, one of the possible future directions is to further extend the I-MMSE relationship to colored Gaussian feedback channels in both discrete time and continuous time.

While the proposed direction is well within reach in discrete time, the same problem appears to be far more challenging in continuous time due to the inherent intractability of continuous-time Gaussian processes. A natural goal in this direction is to find the broadest class of continuous-time Gaussian processes for which the extended I-MMSE relationship holds. One special class of Gaussian processes that appear to be tractable are those featuring canonical representations~\cite{hi93} (in terms of the standard Brownian motions) without discrete spectrum terms (see (6.8.2) of~\cite{ih93}), and thereby Girsanov's theorem~\cite{LS}, a key technical ingredient used in our proofs of Theorems~\ref{continuous-time-main-result-1} and~\ref{continuous-time-main-result-2}, can be carried over to such processes. Since fractional Brownian motions are a special class of such Gaussian processes, one would arrive at results which include the ones in~\cite{du08} as special cases.

\indent {\bf General feedback channels.} The exploration of fundamental relationships between information and estimation measures has not been confined to Gaussian channels only. As a matter of fact, a considerable amount of work, largely inspired by the I-MMSE relationship for Gaussian channels, have been devoted to investigating non-Gaussian channels for parallel relationships. In this direction, representative work include additive channels~\cite{gu05a}, arbitrary channels~\cite{pa07}, Poisson channels~\cite{gu08, at12, ta14a}, binomial and negative binomial channels~\cite{ta14, ta14a}. This thread of efforts have culminated in a recent paper~\cite{ji14}, where a unified general formula relating information and estimation measures was derived for L\'evy channels, which encompass Gaussian channels and a number of other non-Gaussian channels as special cases.

One of the possible directions is to further generalize the result in~\cite{ji14} to Levy channels with feedback/memory, in either discrete or continuous time. Alternatively, one can also consider deriving the extended I-MMSE relationship for channel featuring noise with jumps (obviously, noise of this type naturally exists in a variety of real-life situations). For this direction, it might be wiser to first consider additive Levy processes (which are different from Levy channels in~\cite{ji14} in spite of the same name), which have been extensively studied in mathematical theory and practical applications. Note that such extension, if successful, would generalize the one in~\cite{du10}, which only deals with pure jump processes. A key ingredient for success would be an ``explicit'' Girsanov-type theorem for Levy processes.

{\bf Limiting version.} For most non-degenerate channels with feedback/memory, the capacity is computed via maximizing the (directed) mutual information rate, rather than the mutual information. This fact necessitates the consideration of the limiting version of the extended I-MMSE relationship in discrete time as $n$ tends to infinity. The power of such a ``limiting approach'' has been showcased in Kim's variational formulation~\cite{ki10} of Gaussian feedback capacity as a limiting version of finite block capacity of Cover and Pombra~\cite{co1989}, which has been used to derived/expressed classes of Gaussian feedback channels. It is certainly worthwhile to explore whether the extended MMSE also feature a limiting version.

There are hurdles for the journey along this direction: First of all, not all input processes will guarantee the limit of the mutual information rate is well-defined. Another issue is the differentiability/smoothness/analyticity of the mutual information rate, which may fail for certain channels~\cite{gm05, hm10}. So, it makes senses to focus one's attention on identifying channels with explicit and reasonable assumptions on the input process for the existence of the mutual information rate and its derivative.

Probably a feasible first step is to examine Gaussian channels with Gaussian inputs and linear feedback. Such a coding scheme proves to be capacity achieving~\cite{co1989} and has been instrumental in Kim's variational formulation~\cite{ki10} of Gaussian feedback capacity. Alternatively, one can also consider Gaussian channels with inputs that feature certain Markovian strcture: at least for discrete-time Gaussian channels with ARMA noise, the capacity will be achieved by feedback-dependent Markovian input processes~\cite{ya07}, which makes it possible to apply Tatikonda's feedback capacity formulation and the corresponding dynamic programming approach~\cite{ta09}. Moreover, for certain Gaussian channels with certain Markovian inputs, the analyticity/smoothness/asymptotics of the mutual information rate has been established~\cite{hm10}.

{\bf Extensions with relaxed assumptions.} The I-MMSE relationship (\ref{I-MMSE}) was originally proven under the condition that the input has finite power, i.e., $\E[X^2] < \infty$, which Wu and Verdu~\cite{wu12} have recently weakened to the existence of the mutual information. Naturally, for the proposed feedback/memory extensions in this paper, one may explore whether/to what extent the finite power constraint can be relaxed. Such a task seems to be technically non-trivial, at least for the continuous-time setting. We expect that conditions that lead to some kind of weak continuity of the extended MMSE should be established for such a relaxation, as done in non-feedback case~\cite{wu12} in discrete time.

\subsection{Properties of the Extended MMSE}

Properties of the discrete-time MMSE associated with Gaussian non-feedback/memoryless channels, such as monotonicity, continuity, smoothness, analyticity, concavity and asymptotics, have been extensively studied~\cite{gu11, wu12}. These properties have been utilized in a wide range of applications; in particular, the following two properties~\cite{gu11} of the MMSE (and their extensions) are of great interest and of direct use in deriving the capacity regions of some multi-user Gaussian channels, such as Gaussian wiretap channels~\cite{bu09} and Gaussian broadcast channels~\cite{bu10, bu13}:
\begin{itemize}
\item Gaussian inputs are the hardest to estimate, which means that any non-Gaussian input yields strictly smaller MMSE than a Gaussian input of the same variance;
\item The single-crossing property, which, roughly speaking, says that a Gaussian MMSE curve (with respect to the $snr$) intersects with a non-Gaussian MMSE curve at most once.
\end{itemize}
Naturally one may consider exploring whether or to what extent these properties hold for the extended MMSE in both discrete and continuous time. It is clear that for the extended MMSE, whether these two properties will hold depends on the adopted encoding schemes (namely, $X$ in (\ref{extended-formula-feedback}) and (\ref{continuous-extended-formula-feedback})) or the types of Gaussian channels (equivalently, $g$ in (\ref{extended-formula-memory}) and (\ref{continuous-extended-formula-memory})), which points out a natural future direction: to explore in what scenarios these two properties hold for the extended MMSE. In this direction, some reasonable candidates include Gaussian channels with linear feedback encoding schemes (see, e.g.,~\cite{sc66, ih93}) or Gaussian channels with linear inter-symbol interference (see, e.g.,~\cite{hi88}).

\subsection{Applications to Colored Gaussian Feedback Channels} \label{application-1}

Despite extensive efforts spent on colored Gaussian feedback channels, the capacity of such channels has largely remained unknown, except for some special cases~\cite{ki10}. The extended I-MMSE relationships may be helpful to deepen our understanding of colored Gaussian feedback channels: First, notice that an application of the Cauchy-Schwarz  inequality yields that the correctional terms of an extended MMSE can be upper bounded by the MMSE term, up to a multiplicative constant. Since the MMSE term ``corresponds'' to Gaussian channels without feedback, it is plausible to at least derive some bound~\cite{eb70} (which may depend on the signal-to-noise ratio) between the ratio of the feedback capacity and non-feedback capacity. Second, written as the sum of an MMSE term and correctional terms, an extended MMSE can be of great help, in both discrete and continuous time, to describe the asymptotical behavior~\cite{de89a} of the feedback capacity for the regime when $snr$ is small or large.

While deriving the capacity of a general colored Gaussian feedback channel seems to be far-fetched, one may consider making use of the extended MMSE relationships to derive the feedback capacity for some special colored Gaussian feedback channels. It is well known (see, e.g., Cover and Pombra~\cite{co1989} and Ihara~\cite{ih93}) that for colored Gaussian feedback channels with average power constraints, linear feedback schemes with Gaussian inputs are sufficient to achieve the capacity. This fact can be a major boost of the chance of deriving the exact capacity using the extended I-MMSE: under a linear feedback encoding scheme, the inputs and the outputs are de facto jointly Gaussian, which means both the MMSE and the correctional terms can be more explicitly computed.

Other than Gaussian feedback channels with the average power constraint, the extended I-MMSE relationship for colored Gaussian feedback channels can help us to understand Gaussian channels with other constraints, such as peak power constraints~\cite{sm71} and finite-type input constraints~\cite{ZehaviWolf88,lm95,mrs98,hm09}. Though the exact capacity of such channels are extremely challenging, some bounds or asymptotics of the capacity seems to be well within reach given a corresponding extended I-MMSE.

The above-mentioned initiatives can be parallelly taken for multiple-input-multiple-output (MIMO) Gaussian feedback channels possibly with fading, either in discrete or continuous time. Note that discrete-time MIMO Gaussian fading non-feedback channels have been extensively studied; see representative work~\cite{Telatar1999, tse2002} on the capacity of such channels. With regard to extending the I-MMSE relationship to such channels in continuous-time, we remark that higher dimensional Girsanov's theorem still holds true and it appears that the extended I-MMSE relationship is within reach. Needlessly to say, such an extended I-MMSE relationship can offer a new perspective to examine discrete-time MIMO Gaussian fading feedback channels and further study continuous-time Gaussian MIMO feedback channels.

\subsection{Applications to Multi-User Gaussian Channels}

{\bf Discrete-time.} The original I-MMSE relationship has been applied to discrete-time multi-user non-feedback Gaussian channels including Gaussian broadcast channels, wiretap channels and interference channels and so on. Naturally, one tempting direction is to explore the possible applications of the extended I-MMSE relationship to discrete-time multi-user Gaussian channels when the feedback is present. For this purpose, one of the imminent problems is to identify those multi-user Gaussian channels for which linear feedback coding schemes achieve the capacity regions. Alternatively, one can also look into whether a ``multi-user'' version of the extended I-MMSE relationship exists, which may involve conditional mutual information with multiple message sets. As might be expected, such a multi-user extended I-MMSE relationship can provide more insights between the interactions among the users.

{\bf Continuous-time.} Recently, the infinite bandwidth capacity regions of a continuous-time white Gaussian multiple access channel with/without feedback, a continuous-time white Gaussian interference channel without feedback and a continuous-time white Gaussian broadcast channel without feedback have been derived in~\cite{li14}. The continuous-time I-MMSE relationship has been applied to derive the capacity region of continuous-time white Gaussian broadcast channels. It is very natural to further extend the above-mentioned results and derive the capacity region for more general Gaussian multi-user channels with feedback, such formulas might be of great help for the derivation of the capacity region of continuous-time white Gaussian broadcast channels with feedback, or even more general continuous-time multi-user channels.

\subsection{Applications to Gaussian Memory Channels}

It is conceivable that the extended I-MMSE relationships (\ref{extended-formula-memory}) and (\ref{continuous-extended-formula-memory}) may be helpful for us to further understand Gaussian memory channels (see, e.g.,~\cite{sh91, me14}). To be more precise, we believe that such extended relationships will be helpful in terms of estimating/computing the capacity (region) of (multi-user) Gaussian channels with input/output memory.

\subsection{Applications of Our New Approach}

Other than the extended I-MMSE relationships, one may also consider whether/how the proposed new approach for deriving the extended I-MMSE relationship can be applied elsewhere. Below is a list of several scenarios where it can be instrumental.

{\bf Other information-theoretic quantities.} Other than recovering and extending the original I-MMSE relationship, the proposed approach in this paper may be further applied to study other information-theoretic quantities as well, which has been evidenced by the simple and direct proof (see Section~\ref{new-proof-2}) for the classical de Bruijn's identity~\cite{st59, co85}. It is our opinion that investigations on whether our approach can be applied elsewhere, particularly to the situations where the derivatives of certain information-theoretic quantity are needed, is highly likely to bear fruit. Here, we remark that the derivative of relative entropy has been examined for channels involving mismatched estimation without feedback/memory via different approaches from ours (see~\cite{ve09, we10}); it is possible that our approach in this work may provide an alternative proof or even extend the obtained results to more general channels with feedback and memory.

{\bf Higher order derivatives.} The second order derivative of the mutual information and entropy power function have been computed in~\cite{gu11, pa09}, which, among many other applications, have played a key role in understanding the concavity of the mutual information and deriving entropy power inequalities for Gaussian channels~\cite{co85, de89, pa09, pa11} and deriving the ``single crossing property'' of MMSE~\cite{gu11}; very recently, higher order derivatives of MMSE have been computed in a recursive form via a special Lie algebra structure~\cite{le15}. We expect that such results can be extended to Gaussian feedback channels. Rough computations suggest that the framework of our approach can also be applied to compute higher order derivatives explicitly. Other than understanding concavity, such explicit expressions can also help to characterize the asymptotic behavior of the mutual information and entropy power function associated with Gaussian feedback channels. In this direction, some Talyor-series-expansion-like formulae seem to be within reach, which, undoubtedly, will yield a finer characterization of the behavior of the mutual information and entropy power function of Gaussian feedback channels.

{\bf Entropy power inequalities.} The ideas and techniques in the proof of the original I-MMSE relationship has been used to give new and simpler proofs of a number of entropy power inequalities~\cite{ve06} associated with Gaussian non-feedback channels. It is certainly worthwhile to look into whether these inequalities can be extended to Gaussian feedback channels using our new approach. And, obviously, the same questions can be asked in the continuous-time setting, which, however, appears to be much more challenging.

{\bf Sampling theorems.} The general framework and technical tools employed in this work may further be used to explore the connections between discrete-time and continuous-time channels, or more precisely, whether/how one continuous-time channel can be approximated by its discrete-time versions obtained through sampling. In this direction, we propose the following conjecture.
\begin{conj} \label{discrete-to-continuous}
Consider the continuous-time channel (\ref{feedback-sampling-conjecture}). Let $0=t_1 < t_2 < \cdots < t_{m-1} < t_m=T$ and define $\triangle_m=\max \{t_i-t_{i-1}: i=2, 3,\ldots, m\}$. Assume the input $X(s, M, Y_0^s)$ is continuous in $s \in [0, T]$ and $\int_0^T \E[X^2(s, M, Y_0^s)] ds < \infty$. Then, we have
$$
\lim_{m \to \infty, \; \triangle_m \to 0} I(X_{t_1}^{t_m}; Y_{t_1}^{t_m})=I(X_0^T; Y_0^T),
$$
where $X_{t_1}^{t_m}, Y_{t_1}^{t_m}$ are samples of $X_0^T, Y_0^T$ at times $t_1, t_2, \ldots, t_m$.
\end{conj}

For band-limited white Gaussian non-feedback channels, the celebrated Shannon's sampling theorem~\cite{sh49} states that a sampling fine enough will completely determine the original channels. Naturally connecting discrete-time and continuous-time Gaussian channels in a more general setting, a valid Conjecture~\ref{discrete-to-continuous} would be of fundamental importance for a deeper understanding of continuous-time Gaussian feedback channels: it will provide valuable insights to the study of continuous-time channels in general, and may even allow straightforward translations of the results or techniques from the discrete-time setting to the continuous-time one for some special cases. Here, we remark that for continuous-time Gaussian feedback channels, a ``blind'' application of the discrete Fourier transform as in Shannon's treatment of white Gaussian non-feedback channels would be problematic since such a transform will render a loss of causality intrinsically residing in the original continuous-time channel.

With regard to Conjecture~\ref{discrete-to-continuous}, some progress has recently appeared in~\cite{li14}: under Novikov's condition, it has been shown that the sampling theorem as in the conjecture holds with respect to {\it increasingly refined} samplings; and furthermore, for the degenerate case that the feedback is absent, the techniques in the proof of Theorems~\ref{discrete-time-main-result}, together with Doob's martingale convergence theorem, has been be used to prove a sampling theorem for the MMSE, which, together with the continuous-time I-MMSE relationship, immediately implies Conjecture~\ref{discrete-to-continuous} as a corollary.

Among many possible applications, a valid Conjecture~\ref{discrete-to-continuous} may lead to a rigorous definition of continuous-time directed information. Here, we remark that the definition as in~\cite{we13}, which employs partitions of time intervals, is only valid for Gaussian channels with strictly delayed feedback. Conjecture~\ref{discrete-to-continuous} suggests a definition using samplings is more plausible for the general case.

Conjecture~\ref{discrete-to-continuous} may give us more insights on stationary Gaussian channels. Consider the following stationary Gaussian channel
\begin{equation}
Y(t)=X(t)+Z(t),
\end{equation}
where the noise $\{Z(t)\}$ is a stationary Gaussian process with spectral density function (SDF) $g(\lambda)$. Assume that the input $\{X(t)\}$ is also a stationary Gaussian process with SDF $f(\lambda)$. In this case, it has long been conjectured that the mutual information rate of the above channel can be computed as
\begin{equation}  \label{Pinsker-Conjecture}
\lim_{T \to \infty} \frac{1}{T} I(X_0^T; Y_0^T) = \frac{1}{4 \pi} \int_{-\infty}^{\infty} \log \left(1+\frac{f(\lambda)}{g(\lambda)}\right) d\lambda
\end{equation}
In some special cases, the above formula has been proved in a rigorous way. For example, if both $f(\lambda)$ and $g(\lambda)$ are rational SDF's, then the formula is true~\cite{Pinsker}. However, in general, there are some mathematical difficulties to prove (\ref{Pinsker-Conjecture}) rigorously. Coupled with the well-known fact that the counterpart result of (\ref{Pinsker-Conjecture}) in discrete time has been proved~\cite{ih93}, Conjecture~\ref{discrete-to-continuous} may help us to establish (\ref{Pinsker-Conjecture}) in full generality. Note that the proven discrete-time counterpart of (\ref{Pinsker-Conjecture}) has been adapted and used in Kim's characterization~\cite{ki10} of discrete-time Gaussian feedback capacity as the solution to a variational problem. So, it is conceivable that a proven (\ref{Pinsker-Conjecture}) will greatly enhance our understanding of continuous-time stationary Gaussian feedback channels.

\bigskip

{\bf Acknowledgement.} We would like to thank Dongning Guo, Young-Han Kim, Tsachy Weissman and Yihong Wu for insightful suggestions and comments, and for pointing out relevant references.

\section*{Appendices} \appendix

\section{Key Lemmas}  \label{key-lemmas}
The following two well-known lemmas are the main tools that will be used to justify the interchanges between a differentiation and an integration in this paper; for their proofs, see~\cite[Theorem A.5.1, Theorem A.5.2]{Durrett}.

\begin{lem} \label{interchange}
Let $f(x,\theta)$ be a continuously differentiable function with respect to $\theta$ and $X$ be a random variable. Let $\varepsilon>0$ and suppose that\\
(i) $u(\theta)=\E[f(X,\theta)]<\infty$ for all $\theta\in(\theta_0-\varepsilon,\theta_0+\varepsilon)$, and\\
(ii) $v(\theta)=\E[\frac{\partial}{\partial \theta}f(X,\theta)]$ is continuous at $\theta=\theta_0$, and\\
(iii) $\E\left(\int_{\theta_0-\varepsilon}^{\theta_0+\varepsilon}\left|\frac{\partial}{\partial \theta}f(X,\theta)\right|d \theta\right)<\infty$,\\
then we have $u'(\theta_0)=v(\theta_0)$, i.e.,
$$\left.\frac{d}{d\theta}\E[ f(X,\theta)]\right|_{\theta=\theta_0}=\left.\E\left[ \frac{\partial}{\partial \theta}f(X,\theta)\right]\right|_{\theta=\theta_0}.$$
\end{lem}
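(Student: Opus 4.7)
The plan is to reduce the statement to the fundamental theorem of calculus combined with Fubini's theorem, which is the standard route for such a differentiation-under-the-expectation result. Specifically, I would start from the identity
\[
f(X,\theta) - f(X,\theta_0) = \int_{\theta_0}^{\theta} \frac{\partial}{\partial s} f(X,s)\, ds,
\]
valid for every $\theta \in (\theta_0 - \varepsilon, \theta_0 + \varepsilon)$ by the continuous differentiability hypothesis, and then take expectations on both sides.

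The key step is to swap the expectation with the inner Lebesgue integral on the right-hand side. Condition (iii) provides exactly the absolute integrability of $\partial_\theta f(X,\theta)$ over the product of the probability space and the interval $(\theta_0 - \varepsilon, \theta_0 + \varepsilon)$, so Fubini's theorem applies and yields
\[
u(\theta) - u(\theta_0) = \mathbb{E}\!\left[ \int_{\theta_0}^{\theta} \frac{\partial}{\partial s} f(X,s)\, ds \right] = \int_{\theta_0}^{\theta} \mathbb{E}\!\left[\frac{\partial}{\partial s} f(X,s)\right] ds = \int_{\theta_0}^{\theta} v(s)\, ds,
\]
where finiteness of $u(\theta)$ is guaranteed by (i). This presentation of $u$ as an indefinite Lebesgue integral of $v$ is the heart of the argument.

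Finally, I would invoke condition (ii), the continuity of $v$ at $\theta_0$, together with the Lebesgue differentiation theorem (or, more elementarily, the fundamental theorem of calculus for continuous integrands) to conclude that $u$ is differentiable at $\theta_0$ with $u'(\theta_0) = v(\theta_0)$, as desired.

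The main obstacle is the Fubini step: without hypothesis (iii) one could only justify the swap under additional uniform domination, so some care is needed in checking that (iii) really does give joint integrability on the product measure space rather than merely iterated integrability of $\theta \mapsto \mathbb{E}[|\partial_\theta f(X,\theta)|]$. Once this technicality is dispatched, the rest of the proof is essentially bookkeeping.
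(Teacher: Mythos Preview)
Your proposal is correct and is precisely the standard argument; the paper does not actually supply a proof of this lemma but refers the reader to Durrett's textbook (Theorems A.5.1--A.5.2), where the proof proceeds exactly as you outline: integrate the partial derivative via the fundamental theorem of calculus, apply Fubini's theorem using hypothesis (iii), and then differentiate the resulting indefinite integral at $\theta_0$ using the continuity of $v$ from (ii). Regarding your stated obstacle, note that (iii) gives finiteness of an iterated integral of a \emph{nonnegative} function, so Tonelli's theorem immediately upgrades this to joint integrability on the product space, and there is no real subtlety there.
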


The following lemma is a direct consequence of the above one.
\begin{lem} \label{interchange'}
Let $f(x,\theta)$ be a continuously differentiable function with respect to $\theta$ and $X$ be a random variable. Let $\varepsilon>0$ and suppose that\\
(i) $u(\theta)=\E[f(X,\theta)]<\infty$ for $\theta\in(\theta_0-\varepsilon,\theta_0+\varepsilon)$, and \\
(ii) $\E\left[\sup\limits_{\theta \in(\theta_0-\varepsilon, \theta_0+\varepsilon)}\left|\frac{\partial}{\partial \theta}f(X, \theta)\right|\right]<\infty$,\\
then we have $u'(\theta_0)=v(\theta_0)$, i.e.,
$$\left.\frac{d}{d\theta}\E[ f(X,\theta)]\right|_{\theta=\theta_0}=\left.\E\left[ \frac{\partial}{\partial \theta}f(X,\theta)\right]\right|_{\theta=\theta_0}.$$
\end{lem}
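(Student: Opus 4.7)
The plan is to reduce Lemma~\ref{interchange'} to Lemma~\ref{interchange} by showing that its single hypothesis (ii), namely that $M := \E\bigl[\sup_{\theta\in(\theta_0-\varepsilon,\theta_0+\varepsilon)}|\partial f(X,\theta)/\partial\theta|\bigr]<\infty$, implies both hypotheses (ii) and (iii) of Lemma~\ref{interchange}. Once these are verified, the conclusion $u'(\theta_0)=v(\theta_0)$ transfers verbatim, and no additional computation is needed.

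First I would dispatch hypothesis (iii) of Lemma~\ref{interchange} via Tonelli's theorem. Since $|\partial f/\partial\theta|$ is nonnegative and jointly measurable, we may exchange the order of expectation and Lebesgue integration to obtain
\begin{equation*}
\E\left(\int_{\theta_0-\varepsilon}^{\theta_0+\varepsilon}\left|\frac{\partial}{\partial\theta}f(X,\theta)\right|d\theta\right)
= \int_{\theta_0-\varepsilon}^{\theta_0+\varepsilon}\E\left[\left|\frac{\partial}{\partial\theta}f(X,\theta)\right|\right]d\theta
\leq 2\varepsilon\cdot M < \infty,
\end{equation*}
where the inequality uses the pointwise bound $|\partial f(X,\theta)/\partial\theta|\leq\sup_{\theta}|\partial f(X,\theta)/\partial\theta|$ inside the expectation.

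Next I would establish hypothesis (ii) of Lemma~\ref{interchange}, namely continuity of $v(\theta)=\E[\partial f(X,\theta)/\partial\theta]$ at $\theta_0$, by invoking the dominated convergence theorem. For any sequence $\theta_n\to\theta_0$ with $\theta_n\in(\theta_0-\varepsilon,\theta_0+\varepsilon)$, the continuous differentiability of $f$ in $\theta$ forces $\partial f(X,\theta_n)/\partial\theta\to\partial f(X,\theta_0)/\partial\theta$ pointwise, while the uniform majorant $\sup_{\theta\in(\theta_0-\varepsilon,\theta_0+\varepsilon)}|\partial f(X,\theta)/\partial\theta|$ is integrable by hypothesis. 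Dominated convergence then yields $v(\theta_n)\to v(\theta_0)$, and since the sequence was arbitrary, continuity at $\theta_0$ follows.

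With hypotheses (ii) and (iii) of Lemma~\ref{interchange} in hand (hypothesis (i) being identical in both lemmas), the conclusion $u'(\theta_0)=v(\theta_0)$ is immediate from Lemma~\ref{interchange}. I do not anticipate any real obstacle here; the only subtlety is being careful to work on the open interval $(\theta_0-\varepsilon,\theta_0+\varepsilon)$ so that the supremum majorant used in the DCT step and the integral bound used in the Tonelli step both refer to the same set of $\theta$ values as the hypothesis provides.
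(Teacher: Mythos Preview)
Your proposal is correct and matches the paper's approach: the paper simply states that Lemma~\ref{interchange'} is ``a direct consequence'' of Lemma~\ref{interchange}, and your reduction---verifying hypothesis (iii) by Tonelli and hypothesis (ii) by dominated convergence using the integrable uniform majorant---is precisely the intended verification.
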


\section{Justifications for the interchanges in Section~\ref{new-proof-1}}  \label{Uniform-Integrability}

{\bf Justification of $(a)$.} We will need to show that for any $\rho_0 \in \mathbb{R}$,
\begin{equation} \label{just-2}
\left. \frac{d}{d\rho} \E[\log f_{Y}(Y)] \right|_{\rho = \rho_0} = \left. \E \left[ \frac{d}{d\rho} \log f_{Y}(Y) \right] \right|_{\rho=\rho_0};
\end{equation}
and as will be done in other justifications in the sequel, we will check the technical conditions in the key lemmas in Section~\ref{key-lemmas}. Note that, by the assumption that $E[X^2] < \infty$, we have
\begin{equation*}
E[Y^2]=\rho^2 E[X^2]+E[N^2] < \infty.
\end{equation*}
On the other hand, it follows from
\begin{equation*}
H(Y) \geq H(Y|X)=H(Z)
\end{equation*}
that $H(Y)$ is lower bounded, which yields the finiteness of $\E[\log f(Y_1^n)]$ for all $\rho \in (\rho-\varepsilon, \rho+\varepsilon)$. As in the proof of Section~\ref{new-proof-1}, we have
\begin{equation*}
\E \left[ \frac{d}{d\rho} \log f_{Y}(Y) \right]= \rho \E[(X-\E[X|Y])^2]=\rho(\E[X^2]-\E[\E^2[X|Y]]),
\end{equation*}
which means to prove the continuity of $\E \left[ \frac{d}{d\rho} \log f_{Y}(Y) \right]$ at $\rho=\rho_0$, it suffices to prove that of $\E[\E^2[X|Y]]$ at $\rho=\rho_0$.

As a matter of fact, we will prove the aforementioned continuity at any $\rho$. We first show that
\begin{equation*}
\E[X|Y]=\frac{1}{f_Y(Y)} \int_{\mathbb{R}} \frac{x}{\sqrt{2 \pi}} e^{-(Y-\rho x)^2/2} f_X(x) dx
\end{equation*}
is continuous in $\rho$. To see this, note that for any $\rho$, we have
\begin{equation*}
\frac{x}{\sqrt{2 \pi}} e^{-(Y-\rho x)^2/2} f_X(x) \leq \frac{|x|}{\sqrt{2 \pi}} f_X(x),
\end{equation*}
of which the right hand side is integrable. It then follows from the fact that
$\frac{x}{\sqrt{2 \pi}} e^{-(Y-\rho x)^2/2} f_X(x)$ is continuous at any $\rho$ and the dominated convergence theorem that
\begin{equation*}
\int_{\mathbb{R}} \frac{x}{\sqrt{2 \pi}} e^{-(Y-\rho x)^2/2} f_X(x) dx
\end{equation*}
is continuous in $\rho$. A similar argument can be applied to show that $f_Y(Y)$ is also continuous in $\rho$, which immediately implies the continuity of $\E[X|Y]$ in $\rho$.

We are now ready to show that $\E[\E^2[X|Y]]$ is continuous in $\rho$. To see this, note that it follows from $\E[X^2] < \infty$ that $\{\E[X^2|Y], \, \rho \ge 0\}$ forms a family of uniformly integrable random variables. This, together with the fact that $\E^2[X|Y] \leq \E[X^2|Y]$, implies that $\{\E^2[X|Y], \, \rho \ge 0\}$ also forms a collection of uniformly integrable random variables. By Theorem $5.5.2$ in~\cite{Durrett}, the continuity of $\E[\E^2[X|Y]]$ then follows from that of $\E[X|Y]$ and the uniform integrability of $\{\E^2[X|Y], \rho \geq 0\}$.

Moreover, it can be readily verified that
\begin{align*}
\E \left[ \int_{\rho_0-\eps}^{\rho_0+\eps} \left|\frac{d}{d\rho} \log f_{Y}(Y) \right| d \rho \right]  & = \E \left[ \int_{\rho_0-\eps}^{\rho_0+\eps} \left|\int_{\R} (Y-\rho x) (X-x) f_{X|Y}(x|Y) dx \right| d \rho \right] \\
& \leq  \E \left[ \int_{\rho_0-\eps}^{\rho_0+\eps} \int_{\R} \left| (Y-\rho x) (X-x) \right| f_{X|Y}(x|Y) dx d \rho \right]\\
& \leq  \E \left[ \int_{\rho_0-\eps}^{\rho_0+\eps} \int_{\R} (|Y X|+|Y x|+\rho|x X|+\rho |x^2|) f_{X|Y}(x|Y) dx d \rho \right]\\
& = \int_{\rho_0-\eps}^{\rho_0+\eps} \E[\E[|YX|]+|Y|\E[|X||Y]+\rho |X| \E[|X||Y] + \rho \E[X^2|Y]] d \rho\\
&= \int_{\rho_0-\eps}^{\rho_0+\eps} 2\E[|YX|]+\rho \E[\E^2[|X||Y]]+\rho \E[X^2] d\rho\\
&\leq \int_{\rho_0-\eps}^{\rho_0+\eps} \rho \E[X^2] + \frac{1}{2} \E[X^2]+ \frac{1}{2} \E[N^2] + 2 \rho \E[X^2] d\rho,
\end{align*}
which is finite due to the assumption that $\E[X^2] < \infty$ and the fact that $\E[N^2] < \infty$. So, by Lemma \ref{interchange}, we can switch the integration and differentiation as in (\ref{just-2}).

{\bf Justification of $(b)$.} We first prove that for any $\rho_0 \in \mathbb{R}$,
\begin{equation*}
\left. \dfrac{d}{d\rho} \int_{\R}  f_{Y|X}(Y|x) f_X(x) dx \right|_{\rho=\rho_0}= \left. \int_{\R} \dfrac{d}{d\rho} f_{Y|X}(Y|x) f_X(x) dx \right|_{\rho=\rho_0},
\end{equation*}
or equivalently, we prove that for any $\rho_0 \in \mathbb{R}$ and for any $x', z' \in \mathbb{R}$,
\begin{equation}  \label{just-1}
\left. \dfrac{d}{d\rho} \int_{\R}  f_{Y|X}(\rho x' + z'|x) f_X(x) dx \right|_{\rho=\rho_0}= \left. \int_{\R} \dfrac{d}{d\rho} f_{Y|X}(\rho x'+z'|x) f_X(x) dx \right|_{\rho=\rho_0}.
\end{equation}

In what follows, fix $x', z' \in \mathbb{R}$ and $\varepsilon > 0$. Straightforward computations yield that for all $\rho \in (\rho_0-\varepsilon, \rho_0+\varepsilon)$
\begin{equation*}
\int_{\R}  f_{Y|X}(\rho x'+z'|x) f_X(x) dx \leq \frac{1}{\sqrt{2 \pi}} \int_{\R}   f_X(x) dx \leq \frac{1}{\sqrt{2 \pi}},
\end{equation*}
and moreover,
\begin{align*}
\dfrac{\partial}{\partial\rho} f_{Y|X}(\rho x'+z'|x)&=\frac{1}{\sqrt{2\pi}} \dfrac{\partial}{\partial\rho} \left[e^{-(\rho x'- \rho x+z')^2/2}\right]\\
&=-\frac{1}{\sqrt{2\pi}} e^{-(\rho x'- \rho x+z')^2/2} (\rho x' - \rho x + z') (x'-x),
\end{align*}
which, together with the assumption that $E[X^2] < \infty$, immediately implies that
\begin{equation*}
\int_{\R} \sup_{\rho \in (\rho_0-\varepsilon, \rho_0+\varepsilon)} \left| \frac{\partial}{\partial \rho} f_{Y|X}(\rho x'+z'|x) f_X(x) \right| dx < \infty.
\end{equation*}
The interchange as in (\ref{just-1}) then immediately follows from an invocation of Lemma~\ref{interchange'}.

\section{Justifications for the interchanges in Section~\ref{new-proof-2}} \label{deBruijn}

{\bf Justification of $(a)$.} We need to verify that for any $t_0 > 0$,
\begin{equation*}
\left. \dfrac{d}{dt} \int_{\R}  f_{Y|X}(Y|x) f_X(x) dx \right|_{t=t_0}= \left. \int_{\R} \dfrac{d}{dt} f_{Y|X}(Y|x) f_X(x) dx \right|_{t=t_0},
\end{equation*}
or equivalently, we prove that for any $t_0 >0$ and for any $x', z' \in \mathbb{R}$,
\begin{equation*}
\left. \dfrac{d}{dt} \int_{\R}  f_{Y|X}(x' + \sqrt{t} z'|x) f_X(x) dx \right|_{t=t_0}= \left. \int_{\R} \dfrac{d}{dt} f_{Y|X}(x'+\sqrt{t} z'|x) f_X(x) dx \right|_{t=t_0},
\end{equation*}
which follows from a parallel argument as in the proof of (\ref{just-1}).

{\bf Justification of $(b)$.} We need to verify that for any $t_0 > 0$,
\begin{equation*}
\left. \frac{d}{dt} \E[\log f_{Y}(Y)] \right|_{t = t_0} = \left. \E \left[ \frac{d}{dt} \log f_{Y}(Y) \right] \right|_{t=t_0},
\end{equation*}
which follows from a parallel argument as in the proof of (\ref{just-2}).

\section{Justifications for the interchanges in the Proof of Theorem~\ref{discrete-time-main-result}} \label{discrete-interchange}

In this section, we fix $\eps > 0$ and we sometimes write $g_i(W_1^i, Y_1^{i-1})$ as $g_i$ for notational simplicity.

{\bf Justification of $(a)$.} We need to prove that for any $\rho_0 \in \mathbb{R}$,
\begin{equation}  \label{just-4}
\left. \frac{d}{d\rho} \E[\log f(Y_1^n)] \right|_{\rho = \rho_0} = \left. \E \left[ \frac{d}{d\rho} \log f(Y_1^n) \right] \right|_{\rho=\rho_0}.
\end{equation}
Note that, by (\ref{discrete-condition-d}), we have for all $\rho \in [\rho_0-\eps, \rho_0+\eps]$ and for all $i$,
\begin{equation*}
E[Y_i^2]=\rho^2 E[\sup_{\rho \in [\rho_0-\eps, \rho_0+\eps]} g_i^2(W_1^i, Y_1^{i-1})]+E[Z_i^2] < \infty,
\end{equation*}
which implies that $H(Y_i)$ is upper bounded. On the other hand, it follows from
\begin{equation*}
H(Y_i) \geq H(Y_i|Y_1^{i-1}, W_i)=H(Z_i)
\end{equation*}
that $H(Y_i)$ is lower bounded, and so we have obtained the finiteness of $\E[\log f(Y_1^n)]$. As in the proof of Theorem~\ref{discrete-time-main-result}, we have
{\small \begin{align*}
\hspace{-1.5cm} \E \left[ \frac{d}{d\rho} \log f(Y_1^n) \right] & = -\rho \sum_{i=1}^n \E\left[(g_i-\E[g_i|Y_1^n])^2\right] - \rho \sum_{i=1}^n \E\left[Y_i \left(\frac{d}{d\rho} g_i-\left[\frac{d}{d\rho} g_i\right] \right) \right] - \rho^2 \sum_{i=1}^n \E\left[g_i \left[\frac{d}{d\rho} g_i \right]-\E[g_i|Y_1^n] \frac{d}{d\rho} g_i\right],
\end{align*}}
So, to prove the continuity of $\E \left[ \frac{d}{d\rho} \log f(Y_1^n) \right]$ at $\rho=\rho_0$, it suffices to prove that of
$$
\hspace{-1.5cm} \E[g_i^2(W_1^i, Y_1^{i-1})], \,\, \E[\E^2[g_i(W_1^i, Y_1^{i-1})|Y_1^n]], \,\, \E\left[Y_i \frac{d}{d\rho}g_i(W_1^i, Y_1^{i-1})\right], \,\, \E\left[Y_i \left[\frac{d}{d\rho}g_i\right](W_1^i, Y_1^{i-1})\right],
$$
\begin{equation} \label{six-quantities}
\E\left[g_i(W_1^i, Y_1^{i-1}) \left[\frac{d}{d \rho} g_i\right](W_1^i, Y_1^{i-1})\right], \,\,\E\left[\E[g_i(W_1^i, Y_1^{i-1})|Y_1^n] \frac{d}{d\rho} g_i(W_1^i, Y_1^{i-1})\right]
\end{equation}
at $\rho=\rho_0$. With Conditions (\ref{discrete-condition-d-1}) and (\ref{discrete-condition-d-2}) and the fact that for all feasible $i$, $g_i(W_1^i, Y_1^{i-1})$ is continuous in $\rho$, the continuity of $\E[g_i^2(W_1^i, Y_1^{i-1})]$ immediately follows from the dominated convergence theorem. Similarly, it can be also verified that
\begin{equation*}
\E \left[\sup_{\rho \in [\rho_0-\eps, \rho_0+\eps]} \left| g_i(W_1^i, Y_1^{i-1}) \frac{d}{d\rho} g_i(W_1^i, Y_1^{i-1}) \right| \right] < \infty,
\end{equation*}
which implies the continuity of $\E[g_i(W_1^i, Y_1^{i-1}) \frac{d}{d \rho} g_i(W_1^i, Y_1^{i-1})]$. Moreover, a similar argument as in Section~\ref{Uniform-Integrability} can be used to establish the continuity of other quantities in (\ref{six-quantities}) in $\rho$. We then obtain the continuity of $\E \left[ \displaystyle{\frac{d}{d\rho}} \log f(Y_1^n) \right]$, as desired.

Moreover, we verify that
\begin{align*}
\E \left[ \int_{\rho_0-\eps}^{\rho_0+\eps} \left|\frac{d}{d\rho} \log f(Y_1^n) \right| d \rho \right] &= \E\left[ \int_{\rho_0-\eps}^{\rho_0+\eps} \left|\int_{\R^n}\sum_{i=1}^n (Y_i-\rho g_i(w_1^i, Y_1^{i-1}) \bigg( g_i(W_1^i, Y_1^{i-1})-g_i(w_1^i, Y_1^{i-1}) \right. \right.\\
&\left. \left. \quad\quad \quad \quad\quad\quad  +\rho \frac{d}{d\rho} (g_i(W_1^i, Y_1^{i-1})- g_i(w_1^i, Y_1^{i-1}))\bigg) f(w_1^n|Y_1^n) dw_1^n \right| d\rho \right]\\
& \leq \E\left[ \int_{\rho_0-\eps}^{\rho_0+\eps} \int_{\R^n}\sum_{i=1}^n \left|(Y_i-\rho g_i(w_1^i, Y_1^{i-1}) \bigg( g_i(W_1^i, Y_1^{i-1})-g_i(w_1^i, Y_1^{i-1}) \right. \right.\\
&\left. \left. \quad\quad \quad \quad\quad\quad  +\rho \frac{d}{d\rho} (g_i(W_1^i, Y_1^{i-1})- g_i(w_1^i, Y_1^{i-1}))\bigg) \right| f(w_1^n|Y_1^n) dw_1^n  d\rho \right]\\
& < \infty,
\end{align*}
where the finiteness then follows from (\ref{discrete-condition-d-1}) and (\ref{discrete-condition-d-2}). So, by Lemma \ref{interchange}, the integration and differentiation in (\ref{just-4}) can be interchanged.

{\bf Justification of $(b)$.} We need to prove that for any $\rho_0 \in \mathbb{R}$, with probability $1$,
\begin{equation} \label{just-3}
\left. \frac{d}{d\rho}\int_{\R^n}  f(Y_1^n|w_1^n) f(w_1^n)dw_1^n \right|_{\rho=\rho_0} = \left. \int_{\R^n} \frac{d}{d\rho} f(Y_1^n|w_1^n) f(w_1^n)dw_1^n \right|_{\rho=\rho_0}.
\end{equation}
It follows from straightforward computations that for all $\rho \in (\rho_0-\eps, \rho_0+\eps)$
\begin{equation*}
\int_{\R}  f(Y_1^n|w_1^n) f(w_1^n) dw_1^n \leq \frac{1}{(\sqrt{2 \pi})^n} \int_{\R}   f(w_1^n) dw_1^n \leq \frac{1}{(\sqrt{2 \pi})^n}.
\end{equation*}
Moreover, we have
\begin{equation*}
\hspace{-1.5cm} \frac{d}{d \rho} f(Y_1^n|w_1^n) = \sum_{i=1}^n (Y_i-\rho g_i(w_1^i, Y_1^{i-1})) \bigg(g_i(W_1^i, Y_1^{i-1})-g_i(w_1^i, Y_1^{i-1})+\rho \frac{d}{d\rho} (g_i(W_1^i, Y_1^{i-1})- g_i(w_1^i, Y_1^{i-1}))\bigg) f(Y_1^n|w_1^n).
\end{equation*}
It then follows from (\ref{discrete-condition-d-1}) and (\ref{discrete-condition-d-2}) that
\begin{equation*}
\E\left[\int_{\R^n} \sup_{\rho \in [\rho_0-\eps, \rho_0+\eps]} \left|\frac{d}{d\rho} f(Y_1^n|w_1^n) f(w_1^n) \right| dw_1^n \right] < \infty,
\end{equation*}
which further implies that, with probability $1$,
\begin{equation*}
\int_{\R^n} \sup_{\rho \in [\rho_0-\eps, \rho_0+\eps]} \left|\frac{d}{d\rho} f(Y_1^n|w_1^n) f(w_1^n) \right| dw_1^n < \infty.
\end{equation*}
The interchange as in (\ref{just-3}) then immediately follows from an invocation of Lemma~\ref{interchange'}.

\section{Justifications for the interchanges in the Proof of Theorem~\ref{continuous-time-main-result-1}} \label{continuous-interchange}

In this section, let $\eps > 0$ and we sometimes write $g(s, W_0^s, Y_0^s)$ as $g(s)$ for notational simplicity.

{\bf Justification of $(a)$.} We need to prove that for any $\rho_0 \in \mathbb{R}$,
\begin{equation*}
\left. \frac{d}{d\rho} \int_0^T \E [g^2(s, W_0^s, Y_0^s)]ds \right|_{\rho=\rho_0} = \left. 2 \int_0^T \E \left[g(s, W_0^s, Y_0^s) \frac{d}{d\rho}g(s, W_0^s, Y_0^s) \right]ds \right|_{\rho=\rho_0}.
\end{equation*}

It immediately follows from Condition (d) that for any $\rho \in [\rho_0-\eps, \rho_0+\eps]$,
\begin{equation*}
\int_0^T \E [g^2(s, W_0^s, Y_0^s)]ds < \infty,
\end{equation*}
and moreover,
\begin{equation*}
\int_0^T \E\left[ \sup_{\rho \in [\rho_0-\eps, \rho_0+\eps]} 2 g(s, W_0^s, Y_0^s) \frac{d}{d \rho} g(s, W_0^s, Y_0^s) \right]
\end{equation*}
\begin{equation*}
\leq \int_0^T \E\left[\sup_{\rho \in [\rho_0-\eps, \rho_0+\eps]} g^2(s, W_0^s, Y_0^s) \right] ds+\int_0^T \E\left[\sup_{\rho \in [\rho_0-\eps, \rho_0+\eps]} \left(\frac{d}{d \rho}g(s, W_0^s, Y_0^s)\right)^2\right] ds < \infty.
\end{equation*}
The desired interchange then immediately follows from Lemma~\ref{interchange'}.

{\bf Justification of $(b)$.} We need to prove that for any $\rho_0 \in \mathbb{R}$, we have, with probability $1$,
\begin{equation*}
\left. \frac{d}{d\rho}\int \frac{d\mu_{Y|W}}{d\mu_B}(Y_0^T|w) \mu_W(dw) \right|_{\rho=\rho_0} = \left. \int  \frac{d}{d\rho}\frac{d\mu_{Y|W}}{d\mu_B}(Y_0^T|w) \mu_W(dw) \right|_{\rho=\rho_0}.
\end{equation*}

First of all, it follows from Theorem~\ref{Theorem7.1} that $\mu_Y \sim \mu_B$, and
\begin{equation*}
\frac{d\mu_{Y}}{d\mu_B}(Y_0^T)=\int \frac{d\mu_{Y|W}}{d\mu_B}(Y_0^T|w) \mu_W(dw)
\end{equation*}
is finite almost surely, which can be further written as
\begin{equation}  \label{continuity-by-integrability}
\frac{d\mu_{Y}}{d\mu_B}(Y_0^T) = \int \exp\left\{\rho^2 \int_0^T \tilde g(s) g(s) ds+\rho \int_0^T \tilde g(s) dB(s)-\frac{\rho^2}{2}\int_0^T \tilde g^2(s)ds\right\} \mu_W(dw),
\end{equation}
where $g(s, w_0^s, Y_0^s)$ is written as $\tilde{g}(s)$ for notational simplicity. Emphasizing the dependence on $\rho$, we write
$$
\quad b(\rho)= \exp\left\{\rho^2 \int_0^T \tilde g(s) g(s) ds+\rho \int_0^T \tilde g(s) dB(s)-\frac{\rho^2}{2}\int_0^T \tilde g^2(s)ds\right\},
$$
and write
\begin{equation*}
v(\rho) =\int \frac{d}{d\rho} b(\rho) \mu_W(dw)=\int b(\rho) c(\rho) \mu_W(dw),
\end{equation*}
where
\begin{align*}
c(\rho) & = \left(2\rho \int_0^T \tilde{g}(s) g(s) ds+ \rho^2 \int_0^T g(s) \frac{d}{d\rho}\tilde{g}(s) ds + \rho^2 \int_0^T \tilde{g}(s) \frac{d}{d\rho}g(s) ds \right.\\
& \left. + \int_0^T \tilde{g} dB(s)+\rho \int_0^T \frac{d}{d\rho} \tilde{g}(s) dB(s)-\rho \int_0^T \tilde{g}^2(s) ds -\rho^2 \int_0^T \tilde{g}(s) \frac{d}{d\rho} \tilde{g}(s) \right) .
\end{align*}
Now, we will show that there exists a constant $C$ such that for any $\rho_1 < \rho_2$,
$$
\E[|v(\rho_2)-v(\rho_1)|^2] \leq C |\rho_2-\rho_1|^2,
$$
which, by Kolmogov's continuity theorem~\cite{ka91}, implies the continuity of $v(\rho)$ (or, more precisely, $v(\rho)$ has a continuous modification). To this end, note that
\begin{align*}
v(\rho_2)-v(\rho_1) & = \int b(\rho_2) c(\rho_2) - b(\rho_1) c(\rho_1) \mu_W(dw)\\
& = \int (b(\rho_2)- b(\rho_1)) c(\rho_2) \mu_W(dw) - \int b(\rho_1) (c(\rho_2)-c(\rho_1)) \mu_W(dw)\\
& = \int \int_{\rho_1}^{\rho_2} b(\gamma) c(\gamma) d\gamma c(\rho_2) \mu_W(dw)- \int b(\rho_1) (c(\rho_2)-c(\rho_1)) \mu_W(dw),
\end{align*}
which, via tedious yet straightforward computations, can be expanded to a sum of expectation terms, each of which can be proven to be of $O((\rho_2-\rho_1)^2)$. In what follows, we only establish this for a couple of representative terms, since other terms can be handled in a parallel fashion.

{\bf Term $1$.} Letting $a(\rho)=\int_0^T g(s) \frac{d}{d\rho} \tilde{g}(s) ds$, we have
\begin{align*}
\hspace{-1cm} \E\left[\left|\int \int_{\rho_1}^{\rho_2} b(\gamma) a(\gamma) d\gamma a(\rho_2) \mu_W(dw)\right|^2 \right]
& \leq \E\left[\int \left| \int_{\rho_1}^{\rho_2} b(\gamma) a(\gamma) d\gamma a(\rho_2) \right|^2  \mu_W(dw) \right]\\
& =\E\left[\int \left| \int_{\rho_1}^{\rho_2} \int_{\rho_1}^{\rho_2} b(\gamma_1) b(\gamma_2) a(\gamma_1) a(\gamma_2) d\gamma_1 d\gamma_2 \right| a^2(\rho_2) \mu_W(dw) \right]\\
& \leq \int \int_{\rho_1}^{\rho_2} \int_{\rho_1}^{\rho_2} \E\left[\left| b(\gamma_1) b(\gamma_2) a(\gamma_1) a(\gamma_2) \right| \times \left|a(\rho_2)\right|^2 \right] d\gamma_1 d\gamma_2 \mu_W(dw)\\
&=O(|\rho_2-\rho_1|^2),
\end{align*}
where, for the last step, we have used (\ref{d-1}) in Condition (d) and the fact
$$
a(\rho) = \int_0^T g(s) \frac{d}{d\rho} \tilde{g}(s) ds \leq \int_0^T \frac{g^2(s)+(\frac{d}{d\rho}\tilde{g}(s))^2}{2} ds,
$$
and the well-known fact~\cite{ok95} that for any $K$,
\begin{equation} \label{any-K}
\E\left[\exp\left\{K \int_0^T \tilde{g}(s) dB(s)-K^2/2 \int_0^T \tilde{g}^2(s)ds\right\}\right] \leq 1.
\end{equation}

{\bf Term $2$.} We have
\begin{align*}
& \E \left| \int b(\rho_1) \left(\int_0^T \frac{d}{d\rho} \tilde{g}^{(\rho_1)} dB(s)-\int_0^T \frac{d}{d\rho} \tilde{g}^{(\rho_2)} dB(s) \right) \mu_W(dw) \right|^2\\
&=\E \left| \int b(\rho_1) \left(\int_0^T \left(\frac{d}{d\rho} \tilde{g}^{(\rho_1)}-\frac{d}{d\rho} \tilde{g}^{(\rho_2)}\right) dB(s)\right) \mu_W(dw) \right|^2\\
&\leq \int \E b^2(\rho_1) \left(\int_0^T \left(\frac{d}{d\rho} \tilde{g}^{(\rho_1)}-\frac{d}{d\rho} \tilde{g}^{(\rho_2)}\right) dB(s)\right)^2 \mu_W(dw)\\
&=O(|\rho_2-\rho_1|^2),
\end{align*}
where, for the last step, we have used (\ref{d-2}) in Condition (d) and (\ref{any-K}) and the Burkholder-Davis-Gundy inequality~\cite{ka91}.

After handling other terms in a similar fashion, the desired continuity is then established. Moreover, with Conditions (d) and (e), it is straightforward to verify that
\begin{equation*}
\E\left[\int_{\rho_0-\eps}^{\rho_0+\eps} \left|\frac{d}{d\rho} b(\rho)\right| d\rho \right] < \infty,
\end{equation*}
So, with all the technical conditions checked, the desired interchange then immediately follows from Lemma~\ref{interchange}.

{\bf Justification of $(c)$.} We need to prove that for any $\rho_0 \in \mathbb{R}$,
\begin{equation*}
\left. \dfrac{d}{d\rho}\E\left[\log \frac{d\mu_{Y}}{d\mu_B}(Y_0^T)\right] \right|_{\rho=\rho_0} = \left. \E\left[ \dfrac{d}{d\rho}\log \frac{d\mu_{Y}}{d\mu_B}(Y_0^T)\right] \right|_{\rho=\rho_0}.
\end{equation*}

First of all, we will show that for all $\rho \in [\rho_0-\eps, \rho_0+\eps]$, $\E\left[\log \frac{d\mu_{Y}}{d\mu_B}(Y_0^T)\right]$ is finite. To see this, first note that it follows from Theorem~\ref{Theorem7.1} that
\begin{equation*}
\frac{d\mu_{Y}}{d\mu_B}(Y_0^T)=\frac{1}{\E[e^{-\int_0^T g(s) dY+1/2 \int_0^T g^2(s) ds}|Y_0^T]}.
\end{equation*}
By Jensen's inequality, we have
\begin{equation*}
\E\left[ -\int^T_0 g(s) dY_s+\frac{1}{2}\int^T_0 g^2(s)ds |Y_0^T \right] \leq \log\E[e^{-\int^T_0 g(s)
dY_s+\frac{1}{2}\int^T_0 g^2(s)ds}|Y_0^T],
\end{equation*}
and, by the easy fact that $\log x \leq x$ for any $x > 0$,
\begin{equation*}
\log\E[e^{-\int^T_0 g(s) dY_s+\frac{1}{2}\int^T_0 g^2(s)ds}|Y_0^T] \leq \E[e^{-\int^T_0 g(s) dY_s+\frac{1}{2}\int^T_0
g^2(s)ds}|Y_0^T],
\end{equation*}
The desired finiteness then follows from
\begin{equation*}
\hspace{-1cm} \left|\log\E[e^{-\int^T_0 g(s) dY_s+\frac{1}{2}\int^T_0 g^2(s)ds}|Y_0^T]\right| \leq
\left|\E[ -\int^T_0 g(s) dY_s+\frac{1}{2}\int^T_0 g^2(s) ds |Y_0^T]\right|+\E[e^{-\int^T_0 g(s) dY_s+\frac{1}{2}\int^T_0 g^2(s)ds}|Y_0^T].
\end{equation*}

Next, as in the proof of Theorem~\ref{continuous-time-main-result-1}, we have
\begin{align*}
\E\left[\dfrac{d}{d\rho}\log \frac{d\mu_{Y}}{d\mu_B}(Y_0^T)\right] & = \rho \int_0^T \E[\E^2[g(s)|Y_0^T]] ds+\rho \E\left[\left.\E_W\left[\int_0^T \frac{d}{d\rho} g(s) dY(s)\right|Y_0^T \right]\right]\\
&+\rho^2 \int_0^T \E[\frac{d}{d\rho} g(s) \E[g(s)|Y_0^T]] ds-\rho^2 \int_0^T \E\left[ g(s) \E_W\left[\left.\frac{d}{d\rho} g(s)\right|Y_0^T \right] \right]ds.
\end{align*}
Note that
\begin{equation*}
\int_0^T \sup_{\rho \in [\rho_0-\eps, \rho_0+\eps]} \E[\E^2[g(s)|Y_0^T]]ds \leq \int_0^T \sup_{\rho \in [\rho_0-\eps, \rho+\eps]} \E[\E[g^2(s)|Y_0^T]]ds = \int_0^T \sup_{\rho \in [\rho_0-\eps, \rho+\eps]} \E[g^2(s)]ds < \infty,
\end{equation*}
and furthermore,
{\small \begin{equation*}
\hspace{-1.5cm} \int_0^T \sup_{\rho \in [\rho_0-\eps, \rho_0+\eps]} \E\left[\left|\E[g(s)|Y_0^T] \frac{d}{d\rho} g(s)\right|\right]ds \leq \frac{1}{2} \left( \int_0^T \sup_{\rho \in [\rho_0-\eps, \rho+\eps]} \E\left[ \E^2[g(s)|Y_0^T] \right] +  \sup_{\rho \in [\rho_0-\eps, \rho+\eps]} \E\left[ \left(\frac{d}{d\rho} g(s) \right)^2\right]ds \right) < \infty.
\end{equation*}}
In a similar fashion, one can establish that
$$
\E\left[\sup_{\rho \in [\rho_0-\eps, \rho_0+\eps]} \left.\E_W\left[\left|\int_0^T \frac{d}{d\rho} g(s) dY(s) \right| \right|Y_0^T \right]\right] < \infty,
$$
and
$$
\int_0^T \sup_{\rho \in [\rho_0-\eps, \rho_0+\eps]} \E\left[ \left| g(s) \E_W\left[\left.\frac{d}{d\rho} g(s)\right|Y_0^T \right] \right| \right]ds < \infty.
$$
It then immediately follows that $\E\left[\dfrac{d}{d\rho}\log \frac{d\mu_{Y}}{d\mu_B}(Y_0^T)\right]$ is continuous with respect to $\rho$. Moreover, note that
{\small \begin{align*}
\int_{\rho_0-\eps}^{\rho_0+\eps} \E\left[ \left|\dfrac{d}{d\rho}\log \frac{d\mu_{Y}}{d\mu_B}(Y_0^T) \right| \right] d\rho
&=\int_{\rho_0-\eps}^{\rho_0+\eps} \E\left[\left| \dfrac{d}{d\rho}\left(\frac{d\mu_{Y}}{d\mu_B}(Y_0^T)\right)/\frac{d\mu_{Y}}{d\mu_{B}}(Y_0^T) \right| \right] d\rho\\
& \hspace{-6cm} \leq \int_{\rho_0-\eps}^{\rho_0+\eps} \E  \bigg[  \E \left[\left. \left| \int_0^T g(s) dY(s)\right| \right|Y_0^T \right] +\rho \int \left| \int_0^T \frac{d}{d\rho} \tilde{g}(s) dY(s) \right| \mu_{W|Y}(dw|Y_0^T)  \\
& \hspace{-7cm}\quad +\rho\int_0^T \left|\E[g(s)|Y_0^T]g(s)-\E[g^2(s)|Y_0^T]\right|ds+\rho^2 \int_0^T \left| \frac{d}{d\rho} g(s) \E[g(s)|Y_0^T]\right| ds+\rho^2 \int_0^T \left|\tilde{g}(s) \frac{d}{d\rho} \tilde{g}(s)\right| ds \mu_{W|Y}(dw|Y_0^T) \bigg] d\rho.
\end{align*}}
It then follows from Condition (d) that
\begin{equation*}
\int_{\rho_0-\eps}^{\rho_0+\eps} \E\left[ \left|\dfrac{d}{d\rho}\log \frac{d\mu_{Y}}{d\mu_B}(Y_0^T) \right| \right] d\rho < \infty.
\end{equation*}
Finally, with all the technical conditions checked, the desired interchange follows from Lemma~\ref{interchange}.

\medskip


\begin{thebibliography}{100}

\bibitem{me14}
M.~Asadi, Xiujie Huang, Aleksandar Kavcic and Narayana (Prasad) Santhanam.
Optimal detector for multilevel NAND flash memory channels with intercell interference.
{\em IEEE Journal on Selected Areas in Communications}, vol. 32, no. 5, pp. 825-835, 2014.

\bibitem{at12}
R.~Atar and T.~Weissman.
\newblock Mutual information, relative entropy, and estimation in the Poisson channel. {\em IEEE Trans. Info. Theory}, vol. 58, no. 3, pp. 1302-1318, 2012.

\bibitem{bu09}
R.~Bustin, R.~Liu, H.~V.~Poor and S.~Shamai.
\newblock A MMSE Approach to the secrecy capacity of the MIMO Gaussian wiretap channel. {\em EURASIP Journal on Wireless Communications and Networking}, special issue on wireless physical security, 2009.

\bibitem{bu13}
R.~Bustin, M.~Payaro, D.~Palomar and Shamai.
\newblock On MMSE crossing properties and implications in parallel vector Gaussian channels.
\newblock {\em IEEE Trans. Info. Theory}, vol. 59, no. 2, 2013.

\bibitem{bu15}
R.~Bustin, H.~Poor and S.~Shamai.
\newblock The Effect of Maximal Rate Codes on the Interfering Message Rate. Preprint, arXiv:1404.6690, 2015.

\bibitem{bu10}
R.~Bustin and S.~Shamai.
\newblock The I-MMSE approach on the weak Gaussian Z-inteference channel and the type I Gaussian broadcast-Z-inteference channel. {\em IEEE 26-th Convention of Electrical and Electronic Engineers in Israel}, pp. 000291-000295, Nov., 2010.

\bibitem{co85}
M.~Costa.
\newblock A new entropy power inequality. {\em IEEE Trans. Info. Theory}, vol. 31, no. 11, pp. 751-760, 1985.

\bibitem{co1989}
T.~Cover and S.~Pombra.
\newblock Gaussian feedback capacity, {\em IEEE Trans.
Info. Theory}, vol. 35, no. 1, pp. 37–43, 1989.

\bibitem{co2006}
T.~Cover and J.~Thomas.
{\em Elements of Information Theory}, 2nd ed., Wiley Interscience, New York, 2006.

\bibitem{cl05}
I.~Crimaldi  and L.~Pratelli.
\newblock Convergence results for conditional expectations. {\em Bernoulli,} vol. 11, no. 4, pp. 737-745, 2005.

\bibitem{de89}
A.~Dembo.
\newblock Simple proof on the concavity of the entropy power with respect to added Gaussian noise.
{\em IEEE Trans. Info. Theory}, vol. 35, no. 4, pp. 887-888, 1989.


\bibitem{du70}
T.~Duncan.
\newblock On the calculation of mutual information. SIAM J. Appl. Math., vol. 19, pp. 215–220, 1970.

\bibitem{de89a}
A.~Dembo.
\newblock On Gaussian feedback capacity.
\newblock {\em IEEE Trans. Info. Theory}, vol. 35, no. 5, pp. 1072-1076, 1989.

\bibitem{du08}
T.~Duncan.
\newblock Mutual information for stochastic signals and fractional Brownian motion. {\em IEEE Trans. Info. Theory}, vol. 54, no. 10, pp. 4432-4438, 2008.

\bibitem{du10}
T.~Duncan.
\newblock Mutual information for stochastic signals and Levy processes. {\em IEEE Trans. Info. Theory}, vol. 56, no. 1, pp. 18-24, 2010.

\bibitem{Durrett} R. Durrett. \newblock{\em Probability: theory and examples},
4th ed., Cambridge Series in Statistical and Probabilistic Mathematics, Cambridge University Press, Cambridge, 2010.

\bibitem{eb70}
P.~Ebert.
\newblock The capacity of the Gaussian channel with feedback.
\newblock {\em Bell Syst. Tech. J.}, vol. 49, pp. 1705-1712, 1970.

\bibitem{go94} E.~Goggin.
\newblock Convergence in distribution of conditional expectations.
\newblock {\em The Annals of Probability}, vol. 22, no. 2, pp. 1097-1114, 1994.


\bibitem{gu05}
D.~Guo, S.~Shamai, and S.~Verdu.
\newblock Mutual information and minimum mean-square error in Gaussian channels. \newblock {\em IEEE Trans. Info. Theory}, vol. 51, no. 4, pp. 1261-1282, 2005.

\bibitem{gu05a}
D.~Guo, S.~Shamai, and S.~Verdu.
\newblock {\em Additive non-Gaussian noise channels: mutual information and conditional mean estimation}. \newblock {\em IEEE ISIT}, pp. 719-723, 2005.

\bibitem{gu08}
D.~Guo, S.~Shamai and S.~Verdu.
\newblock Mutual information and conditional mean estimation in Poisson channels. \newblock {\em IEEE Trans. Info. Theory}, vol. 54, no. 5, pp. 1837-1849, 2008.

\bibitem{gu13}
D.~Guo, S.~Shamai and S.~Verdu.
\newblock {\em The Interplay Between Information and Estimation Measures}, Foundations and Trends in Signal Processing, vol. 6, no. 4,  pp. 243-429, 2013.

\bibitem{gu11}
D.~Guo, Y.~Wu, S.~Shamai and S.~Verdu.
\newblock Estimation in Gaussian Noise: Properties of the Minimum Mean-Square Error. \newblock {\em IEEE Trans. Info. Theory}, vol. 57, no. 4, pp. 2371-2385, 2011.

\bibitem{gm05}
G.~Han and B.~Marcus.
\newblock Analyticity of entropy rate of hidden Markov chains.
\newblock {\em IEEE Trans. Info. Theory},
vol. 52, no. 12, pp. 5251-5266, 2006.

\bibitem{hm09}
G.~Han and B.~Marcus.
\newblock Asymptotics of input-constrained binary symmetric channel capacity.
\newblock {\it Annals of Applied Probability}, vol. 19, no. 3, pp. 1063-1091, 2009.

\bibitem{hm10}
G.~Han and B.~Marcus.
\newblock Analyticity of entropy rate of hidden Markov chains with continuous alphabet.
\newblock vol. 61, no. 6, pp. 3013 - 3028, 2015.

\bibitem{HanSong2014}
G.~Han and J.~Song.
Extensions of the I-MMSE relationship.
{\it IEEE ISIT}, pp. 2202-2206, 2014.

\bibitem{hi93}
T.~Hida and M.~Hitsuda.
\newblock {\em Gaussian Processes}, Amer. Math. Soc, Providence, Rhode Island, 1993.

\bibitem{hi88}
W.~Hirt and J.~Massey.
\newblock Capacity of the discrete-time Gaussian channel with intersymbol interference.
\newblock {\em IEEE Trans. Info. Theory}, vol. 34, no. 3, 1988.

\bibitem{ih93}
S.~Ihara.
\newblock{\em Information Theory for Continuous Systems}, World Scientific Publishing Co., Inc., River Edge, NJ, 1993.

\bibitem{ji14}
J.~Jiao, K.~Venkat and T.~Weissman.
\newblock relationships between information and estimation in scalar Levy channels.
\newblock {\em IEEE ISIT}, pp. 2212-2216, 2014.

\bibitem{ka71}
T.~Kadota, M.~Zakai and J.~Ziv.
Mutual information of the white Gaussian channel with and without feedback. {\em IEEE Trans. Info. Theory}, vol. 17, pp. 368-371, 1971.

\bibitem{ka91}
I.~Karatzas and S.~Shreve.
\newblock {\em Brownian Motion and Stochastic Calculus},
\newblock New York: Springer-Verlag, 1991.

\bibitem{ki10}
Young-Han Kim.
\newblock Feedback capacity of stationary Gaussian channels.
\newblock {\em IEEE Trans. Info. Theory}, vol. 56, no. 1, pp. 57-85, 2010.

\bibitem{le15}
Michel Ledoux.
\newblock Heat flow derivatives and minimal mean-square error in Gaussian noise.
\newblock Preprint, available at http://perso.math.univ-toulouse.fr/ledoux/files/2016/01/Noise2.pdf.

\bibitem{LS}
R.~Liptser and A.~Shiryaev.
\newblock{\em  Statistics of random processes (I): General theory}, 2nd edition, Springer-Verlag, Berlin, 2001.

\bibitem{li14}
X.~Liu and G.~Han.
\newblock Recent results in continuous-time network information theory.
\newblock {\em IEEE ISIT}, pp. 2421-2425, 2014.

\bibitem{lm95}
D.~Lind and B.~Marcus.
\newblock {\em An Introduction to Symbolic Dynamics and Coding}.
\newblock Cambridge University Press, 1995.

\bibitem{lo06}
A.~Lozano, A.~Tulino, and S.~Verdu.
\newblock Optimum power allocation for parallel Gaussian channels with arbitrary input distributions. \newblock {\em IEEE Trans. Info. Theory}, vol. 52, no. 7, pp. 3033-3051, 2006.

\bibitem{mao94}
X.~Mao.
\newblock {\em Exponential Stability of Stochastic Differential Equations}, Marcel Dekker, New York, 1994.

\bibitem{mao08}
X.~Mao.
\newblock {\em Stochastic Differential Equations and Applications}, Horwood Pub., 2008.

\bibitem{massey90}
J.~Massey.
\newblock ``Causality,feedback,anddirectedinformation,'' in {\em Proc. Int.
Symp. Inf.Theory Appl.}, pp. 303–305, 1990.

\bibitem{mrs98}
B.~Marcus, R.~Roth and P.~Siegel. \newblock Constrained Systems and
Coding for Recording Channels. \newblock Chap. 20 in {\em Handbook
of Coding Theory} (eds. V. S. Pless and W. C. Huffman), Elsevier
Science, 1998.

\bibitem{ok95}
B.~Oksendal.
\newblock {\em Stochastic Differential Equations: An Introduction with Applications}, Springer, Berlin, 1995.

\bibitem{pa06}
D.~Palomar and S.~Verdu.
\newblock Gradient of mutual information in linear vector Gaussian channels. {\em IEEE Trans. Info. Theory}, vol. 52, no. 1, pp. 141-154, 2006.

\bibitem{pa07}
D.~Palomar and S.~Verdu.
\newblock Representation of mutual information via input estimates. \newblock {\em IEEE Trans. Info. Theory}, vol. 53, no. 2, pp. 453-470, 2007.

\bibitem{pa09}
M.~Payaro and D.~Palomar.
\newblock Hessian and concavity of mutual information, differential entropy, and entropy power in linear vector Gaussian channels.
\newblock {\em IEEE Trans. Info. Theory}, vol. 55, no. 8, pp. 3613-3628, 2009.

\bibitem{pa11}
M.~Payaro, M.~Gregori and D.~Palomar.
\newblock Yet another entropy power inequality with an application.
\newblock {\em Wireless Communications and Signal Processing (WCSP), 2011 International Conference on}, pp. 1-5, 2011.

\bibitem{pe06}
M.~Peleg, A.~Sanderovich and S.~Shamai.
\newblock On extrinsic information of good binary codes operating on Gaussian channels.
\newblock {\em Europ. Trans. Telecommun.}, vol. 17, no. 6, 2006.

\bibitem{Pinsker}
M. S. Pinsker.
\newblock Information and information stability of random
variables and processes. 1964.

\bibitem{ro10}
H.~Royden.
{\em Real analysis}, 4th edition, Prentice Hall, Boston, 2010.

\bibitem{sc66}
J.~Schalkwijk and T.~Kailath.
Coding scheme for additive noise channels with feedback I: No bandwidth constraint. {\it IEEE Trans. Info. Theory}, vol. 12, no. 2, pp. 172-182, 1966.

\bibitem{sh11}
S.~Shamai.
\newblock From constrained signaling to network interference via an information-estimation perspective.
\newblock Shannon Lecture, {\em IEEE ISIT}, 2011.

\bibitem{sh91}
S.~Shamai, L.~Ozarow and A.~Wyner.
\newblock Information rates for a discrete-time Gaussian channel with intersymbol interference and stationary inputs.
\newblock {\em IEEE Trans. Info. Theory}, vol. 37, no. 6, 1991.

\bibitem{sh49}
C.~Shannon. Communication in the presence of noise. {\em Proc. IRE}, vol. 37, pp. 10-21, 1949.

\bibitem{sm71}
J.~Smith.
\newblock The information capacity of amplitude and variance-constrained scalar Gaussian channels.
\newblock {\em Information and Control}, vol. 18, pp. 203-219, 1971.

\bibitem{st59}
A.~Stam.
\newblock Some inequalities satisfied by the quantities of information of Fisher and Shannon. {\em Information and Control}, vol. 2, pp. 101-112, 1959.

\bibitem{ta14}
C.~Taborda, D.~Guo and F.~P\'{e}rez-Cruz.
\newblock Information-estimation relationships over binomial and negative binomial models. {\em IEEE Trans. Info. Theory}, vol. 60, pp. 2630-2646, 2014.

\bibitem{ta14a}
C.~Taborda, F.~P\'{e}rez-Cruz and D.~Guo.
\newblock New information-estimation results for poisson, binomial and negative binomial models. {\em IEEE ISIT}, pp. 2207-2211, 2014.

\bibitem{ta09}
S.~Tatikonda and S.~Mitter.
\newblock The Capacity of Channels With Feedback.
\newblock {\em IEEE Trans. Info. Theory}, vol. 55, no. 1, pp. 323-349, 2009.

\bibitem{Telatar1999}
E.~Telatar.
\newblock Capacity of Multi-antenna Gaussian Channels.
\newblock {\em European Transactions on Telecommunications}, vol. 10, pp. 585-595, 1999.

\bibitem{tse2002}
L.~Zheng and D.~Tse. Communication on the Grassmann Manifold: A Geometric Approach to the Noncoherent Multiple-Antenna Channel.
\newblock {\em IEEE Trans. Info. Theory}, vol. 48, no. 2, pp. 359-383, 2000.

\bibitem{ve09}
S.~Verdu.
\newblock Mismatched estimation and relative entropy.
\newblock {\em IEEE Trans. Info. Theory}, vol. 56, no. 8, pp. 3712-3720, 2010.

\bibitem{ve06}
S.~Verdu and D.~Guo.
\newblock A simple proof of the entropy power inequality.
\newblock {\em IEEE Trans. Info. Theory}, vol. 52, no. 5, pp. 2165-2166, 2006.

\bibitem{we10}
T.~Weissman.
\newblock The relationship between causal and noncausal mismatched estimation in continuous-time AWGN channels.
\newblock {\em IEEE Trans. Info. Theory}, vol. 56, no. 9, pp. 4256-4273, 2010.

\bibitem{we13}
T.~Weissman, Young-Han Kim and H.~Permuter.
\newblock Directed information, causal estimation, and communication in continuous time.
\newblock {\em IEEE Trans. Info. Theory}, vol. 59, no. 3, pp. 1271-1287, 2013.

\bibitem{wo07}
E.~Mayer-Wolf and M.~Zakai.
\newblock Some relationships between mutual information and estimation error in Wiener space. {\em Ann. Appl. Probab.}, vol. 17, pp. 1102-1116, 2007.

\bibitem{wu12}
Y.~Wu and S.~Verdu.
\newblock Functional properties of minimum mean-square error and mutual information.
\newblock {\em IEEE Trans. Info. Theory}, vol. 58, no. 3, pp. 1289-1301, 2012.

\bibitem{wu11}
Y.~Wu, S.~Shamai and S.~Verdu.
\newblock Degrees of freedom of the interference channel: a general formula.
\newblock {\em IEEE ISIT}, 2011.

\bibitem{ya07}
S.~Yang, Aleksandar Kavcic and Sekhar Tatikonda.
\newblock On the feedback capacity of power-constrained Gaussian noise channels with memory. {\em IEEE Trans. Info. Theory}, vol. 53, no. 3, pp. 929-954, 2007.

\bibitem{za05}
M.~Zakai.
\newblock On mutual information, likelihood ratios, and estimation error for the additive Gaussian channel.
\newblock {\em IEEE Trans. Info. Theory}, vol. 51, no. 9, pp. 3017-3024, 2005.

\bibitem{ZehaviWolf88}
E.~Zehavi and J.~Wolf.
\newblock  On runlength codes.
\newblock {\em IEEE Trans. Information Theory}, \textbf{34}, 45--54, 1988.

\end{thebibliography}
\end{document}